\newtheorem{theorem}{Theorem}[section]
\newtheorem{lemma}[theorem]{Lemma}
\newtheorem{meta-theorem}[theorem]{Meta-Theorem}
\newtheorem{claim}[theorem]{Claim}
\newtheorem{definition}[theorem]{Definition}
\newtheorem{invariant}[theorem]{Invariant}
\definecolor{darkgreen}{rgb}{0,0.5,0}
\Crefname{lemma}{Lemma}{Lemmas}
\Crefname{claim}{Claim}{Claims}
\Crefname{remark}{Remark}{Remarks}
\Crefname{observation}{Observation}{Observations}
\Crefname{invariant}{Invariant}{Invariants}
\algnewcommand\algorithmicswitch{\textbf{switch}}
\algnewcommand\algorithmiccase{\textbf{case}}
\newcommand{\eps}{\varepsilon}
\newcommand{\congest}{$\mathsf{CONGEST}$\xspace}
\newcommand{\poly}{\operatorname{\text{{\rm poly}}}}
\renewcommand{\paragraph}[1]{\medskip\noindent {\bf #1}}
\let\oldtextbf=\textbf
\renewcommand\textbf[1]{{\boldmath\oldtextbf{#1}}}
\newcommand{\FullOrShort}{full}
  \newcommand{\fullOnly}[1]{#1}
  \newcommand{\shortOnly}[1]{}
    \newcommand{\shortOnly}[1]{#1}
    \newcommand{\fullOnly}[1]{}
\begin{document}
\date{}
\title{Parallel Dynamic Maximal Matching}

\author{
  Mohsen Ghaffari\\
  \small MIT \\
  \small ghaffari@mit.edu
  \and 
  Anton Trygub\\
  \small MIT \\
  \small trygub@mit.edu
 }

\maketitle
\begin{abstract} 
We present the first (randomized) parallel dynamic algorithm for maximal matching, which can process an arbitrary number of updates simultaneously. Given a batch of edge deletion or insertion updates to the graph, our parallel algorithm adjusts the maximal matching to these updates in $\poly(\log n)$ depth and using $\poly(\log n)$ amortized work per update. That is, the amortized work for processing a batch of $k$ updates is $k\poly(\log n)$, while all this work is done in $\poly(\log n)$ depth, with high probability. This can be seen as a parallel counterpart of the sequential dynamic algorithms for constant-approximate and maximal matching [Onak and Rubinfeld STOC'10; Baswana, Gupta, and Sen FOCS'11; and Solomon FOCS'16]. 
Our algorithm readily generalizes to maximal matching in hypergraphs of rank $r$---where each hyperedge has at most $r$ endpoints---with a $\poly(r)$ increase in work, while retaining the $\poly(\log n)$ depth. 
\end{abstract}

\section{Introduction and related work}
There has been extensive work on \textit{(sequential) dynamic algorithms}, especially over the past decade. These algorithms adjust the output solution to small updates in the input with only a small amount of computational work per update---and crucially without resolving the problem from scratch. These dynamic algorithms are useful in two ways: (A) In \textit{practical settings that have an intrinsic dynamic nature}, where the input changes slowly over time and we should adapt to these changes. (B) In \textit{static algorithms} that indirectly induce dynamic problems: sometimes when we use one algorithmic subroutine as a part of another larger algorithm, the operations of the larger algorithm cause small changes to the problem solved by the subroutine. The computational efficiency of the overall algorithm then hinges on the subroutine being able to quickly process the updates with work proportional to only the number of updates, rather than the entire problem size. See for instance the recent breakthrough of an almost linear-time maximum flow algorithm~\cite{chen2022maximum}, which relies on efficient dynamic algorithms. See also \cite{hanauer2022recent} for more on dynamic algorithms, including a survey (which has an experimental focus).

In this paper, we seek parallel dynamic algorithms. The general motivation is similar to the sequential setup: such algorithms can be useful in practical dynamic scenarios while allowing us to benefit from parallelism, and they can also play a crucial role in achieving more efficient parallel algorithms for static problems. However, there is one key difference in what we desire from a dynamic algorithm, namely handling many updates at the same time. We discuss that after a quick reminder on the parallel model of computation.

\paragraph{Parallel computation}. We follow the standard work/depth terminology, see e.g., \cite[Section 1.5]{jaja1992introduction} or \cite{blelloch1996programming}. For any algorithm, its \textit{work} is defined as the total number of operations. Its \textit{depth} is defined as the longest chain of operations with sequential dependencies, in the sense that the $(i+1)^{th}$ operation depends on (and should wait for) the results of operation $i$ in the chain. These bounds can be translated to a running time for any given number of processors. With $p$ processors, an algorithm with work $W$ and depth $D$ can be run in $O(W/p+D)$ time~\cite{brent1974parallel}. Therefore, the ultimate goal in parallel algorithms is to devise algorithms with work asymptotically equal (or close, e.g., equal up to logarithmic factors) to their sequential counterpart---such an algorithm is called work efficient (respectively, nearly work efficient)--- which also have a small depth, ideally only polylogarithmic. Such algorithms enjoy optimal speed-up for any given number of processors, up to logarithmic factors.

\paragraph{Dynamic parallel algorithms}. A crucial difference when considering dynamic algorithms in the parallel setting with those in the sequential setting is the importance of processing \textit{many updates} simultaneously. In the sequential setting, we can afford to examine the updates one by one, and that makes the task much easier. In contrast, when we have many processors, there is no excuse to accept and process the updates only one by one. The distinction is even more important when building dynamic parallel algorithms that will be used in the context of a larger static parallel algorithm. If the parallel dynamic subroutine processes the updates one by one, the static parallel algorithm's \textit{depth} would be at least linear in the total number of updates it creates, which would usually make the latter uninteresting.~\footnote{We note that the literature sometimes uses the phrase \textit{batch-dynamic} parallel algorithms. We find it more natural to use the simpler phrase \textit{dynamic parallel algorithms}, because the need for processing many updates comes naturally when one considers parallelism. One can argue that a parallel dynamic algorithm that can process only a single update per time-step would have limited relevance and applicability (especially in problems where the sequential dynamic algorithm's work per update is very small).} 

We want parallel counterparts of the sequential dynamic algorithms, which can benefit from parallelism optimally up to logarithmic factors. In light of the above discussion, our concrete objective is parallel dynamic algorithms that can process updates with work-per-update similar to their sequential counterpart but crucially in only polylogarithmic depth for any batch of simultaneous updates. Note that, if we ignore logarithmic factors, such a result is stronger than (or simultaneously generalizes) two counterparts: (1) the sequential dynamic algorithm for that problem, and (2) the static parallel algorithm for that problem.

There has been growing recent interest in parallel dynamic algorithms. An interesting example is the work of Acar et al.~\cite{acar2019parallel} on parallel dynamic algorithms for graph connectivity, which provides a parallel counterpart of the sequential dynamic algorithm of Holm, de Lichtenberg, and Thorup~\cite{holm2001poly}. This example is notable because it has already found applications in improving the static parallel algorithms: Ghaffari, Grunau, and Qu~\cite{ghaffari2023nearly} used it to achieve the first work-efficient and sublinear-depth parallel DFS algorithm. See also the work of Tseng, Dhulipala, and Shun~\cite{tseng2022parallel} on parallel dynamic algorithms for minimum spanning forest, and that of Liu et al.~\cite{liu2022parallel} on parallel dynamic algorithms for k-core decomposition. 

\paragraph{Maximal matching, and sequential dynamic algorithms.} Our focus in this paper is on matching, one of the most basic and well-studied graph problems. Maximal matching, and its relatives including constant approximations of maximum matching, have received extensive attention in the sequential dynamic setting; see e.g., ~\cite{onak2010maintaining,baswana2011fully,bhattacharya2015deterministic, bhattacharya2016new,solomon2016fully,assadi2021fully,kiss2022deterministic,bhattacharya2023dynamicBeat2,bhattacharya2023dynamic1Eps, azarmehr2024fully}. The following results are of particular relevance for us\footnote{This summary discusses the most relevant prior work and ignores, for instance, algorithms with polynomial update time (e.g., for deterministic maximal matching). Also, our focus is maximal matching and we do not discuss improved approximations of maximum matching.}: Onak and Rubinfeld~\cite{onak2010maintaining} gave the first algorithm that maintains a constant approximate matching in $\poly(\log n)$ amortized time per update (edge deletion or insertion). This was improved to $O(\log n)$ time per update and for the more stringent maximal matching problem by Baswana et al.\cite{baswana2011fully}, then to constant amortized time per update for maximal matching by Solomon~\cite{solomon2016fully}. Assadi and Solomon gave a generalized result for maximal matching in hypergraphs~\cite{assadi2021fully}, with $O(r^2)$ amortized time in hypergraphs of rank $r$. All these algorithms are randomized and assume that the updates are determined by an \textit{oblivious adversary} that does not know the randomness used by the dynamic algorithm.

\paragraph{Our result---dynamic parallel maximal matching.}
To the best of our knowledge, there is no known parallel dynamic algorithm for maximal matching (or constant-approximate matching) with polylogarithmic amortized time per update in general graphs.\footnote{See the work of Liu et al.\cite{liu2022parallel}, which achieved this for low-arboricity graphs.} Indeed, prior work on experimental algorithmics by Angriman et al.\cite{angriman2022batch} pointed out the lack of, and the desire for, such an algorithm with provable guarantees. Their work provides a heuristic with good experimental results but without theoretical guarantees. 

We give a parallel dynamic algorithm that maintains a maximal matching with $\poly(\log n)$ amortized work per update, and it processes any arbitrary size batch of updates in $\poly(\log n)$ depth. These bounds assume we use the algorithm for only $\poly(n)$ updates--see the more general statement below. Up to logarithmic factors, this is an ideal parallelization of the sequential algorithms reviewed above. Similar to those sequential dynamic algorithms, our algorithm is also randomized and works against an adversary that chooses the updates obliviously to the randomness used by our algorithm. Moreover, our algorithm also generalizes to hypergraphs of rank $r$ with a $\poly(r)$ increase in the work bound.

\begin{theorem} [Imprecise statement; see \Cref{thm:main} for precise bounds]
    There is a randomized parallel dynamic algorithm for hypergraph maximal matching, which adapts the maximal matching to any batch of simultaneous edge insertion/deletion updates. The batch update is processed, with high probability, in $\poly(\log (nM))$ depth and using $poly(r\log (nM))$ amortized work per update. Here $r$ denotes the rank of the hypergraph, $n$ denotes the number of vertices, and $M$ denotes the total number of hyperedges inserted or deleted (starting from an empty graph).
\end{theorem}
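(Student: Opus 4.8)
The plan is to reduce the dynamic maintenance problem to a sequence of \emph{static} parallel computations, each of which resolves a ``batch'' of updates, and to control the amortized cost by a potential/hierarchy argument in the style of Baswana--Gupta--Sen. Concretely, I would maintain a hierarchical partition of the vertices into $O(\log n)$ levels, where a matched edge (hyperedge) at level $i$ is responsible for roughly $2^i$ of its incident free edges, and where the matched edge at level $i$ was chosen by a random sampling step among a pool of candidate edges of that size. The key invariant is that every free edge has both endpoints ``owned'' at some level, and that the number of free edges owned at level $i$ at a given endpoint is at most $2^{i+1}$. When a batch of $k$ updates arrives, insertions create new free edges that must be assigned ownership and possibly matched, and deletions of matched edges free up (potentially many) edges that must be rematched. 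The goal is to show that a round of parallel rematching, going level by level from the top down, settles the whole batch in $\poly(\log n)$ depth, and that the total work telescopes to $k \cdot \poly(\log n)$ amortized via the standard argument that an edge matched at level $i$ ``pays for'' $\Theta(2^i)$ units of future work before it can be disturbed.

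The main new ingredient, and the step I expect to be the central obstacle, is performing the rematching of a single level \emph{in parallel} while preserving the random-sampling guarantees that make the amortized analysis work. In the sequential algorithm, when a high-level vertex becomes free, one picks a random incident edge, matches it, and recurses on the newly freed low-level vertices one at a time; the randomness ensures that, in expectation, the newly matched edge survives long enough to amortize the work of scanning its $2^i$-size neighborhood. In the parallel setting, many vertices at the same level become free simultaneously and their candidate edge-sets overlap, so naive independent sampling produces conflicts (two chosen edges sharing an endpoint). I would handle this with an iterated parallel sampling / maximal-independent-set-like subroutine on the ``conflict graph'' of candidate edges: repeatedly sample each candidate edge with an appropriate probability, keep a sampled edge if none of its neighbors in the conflict graph is also sampled, add it to the matching, remove it and its neighbors, and recurse. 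One must argue that $O(\log n)$ such iterations suffice (a standard Luby-type argument on the residual conflict graph) and, more delicately, that the resulting matched edge at each endpoint is still ``random enough'' — i.e., roughly uniform among a set of size $\Omega(2^i)$ — so that the oblivious adversary cannot force it to be deleted too soon in expectation. Establishing this requires a careful coupling or a direct probabilistic argument that the surviving sample at each vertex is spread over a large enough support.

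Having set that up, the remaining steps are: (i) show the depth bound — each level's rematching takes $\poly(\log n)$ depth, there are $O(\log n)$ levels processed sequentially from top to bottom, and within a level the Luby-type loop adds another $\poly(\log n)$ factor, so the whole batch is $\poly(\log n)$ depth; (ii) set up the potential function $\Phi$ that assigns to each level-$i$ matched edge a credit of $\Theta(2^i)$ (or $\Theta(i \cdot 2^i)$ to absorb the logarithmic scan overhead), and show that each unit of algorithmic work is either paid for by a fresh update (charging $\poly(\log n)$ to each of the $k$ updates) or by the release of credit when a matched edge is destroyed or demoted; (iii) verify via the randomness argument above that the expected credit released per matched edge is $\Omega(1/\poly(\log n))$ of the work incurred when it was created, so the amortization closes in expectation, and then boost to high probability over the whole run of $\poly(nM)$ updates by a concentration argument (e.g., a martingale / Azuma bound on the accumulated work, using that each step's contribution is bounded). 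Finally, (iv) the hypergraph generalization: replace ``incident edge'' by ``incident hyperedge,'' note that matching a hyperedge frees up to $r$ vertices rather than $1$, so the branching factor in the hierarchy and the conflict-graph degree grow by a $\poly(r)$ factor, which propagates into the work bound as $\poly(r \log(nM))$ while leaving the depth untouched.

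A secondary technical point worth flagging: because the analysis is amortized and the adversary is only oblivious (not adaptive), the high-probability statement must be over the fixed update sequence, and one has to be careful that the potential never goes negative and that the ``$\poly(n)$ updates'' restriction (equivalently, the $\log(nM)$ dependence) enters exactly where we union-bound the concentration failure probabilities across all batches and all levels. I would isolate this bookkeeping into a single lemma about the global work being $M \cdot \poly(\log(nM))$ with high probability, proved by summing the per-batch bounds and applying a union bound, so that the structural and probabilistic content stays cleanly separated from the accounting.
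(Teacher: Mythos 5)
Your proposal matches the paper's approach in all essentials: the Baswana--Gupta--Sen leveling hierarchy adapted to hyperedges, top-down per-level processing, a Luby-style iterated parallel sampling on the conflict graph of candidate edges (the paper's \texttt{grand-random-settle} procedure), epoch/credit amortization in the style of Solomon and Assadi--Solomon, and a martingale concentration argument to upgrade expectations to high-probability guarantees. One minor simplification the paper makes that you may find useful: rather than proving approximate uniformity of the surviving sample after the correlated Luby rounds, it suffices to establish a one-sided tail bound---that the matched edge is, with high probability, preceded in the adversary's fixed deletion order by $\Omega(\alpha^{\ell}/\poly(\log N))$ of the candidates absorbed into its responsibility set---which the paper certifies via the notion of $\mu$-short epochs together with an adaptive-sampling (``gaming'') concentration lemma.
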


We comment that, given the unspecified polylogarithmic factors, the statement applies to all variants of the PRAM model, e.g., including the weakest variant with exclusive reads and exclusive writes (EREW). The rest of this write-up will focus only on the work and depth of the algorithm, and will not discuss concurrency issues for a particular model variant, considering that those can be resolved with logarithmic overheads~\cite{jaja1992introduction}.

As another side note, we comment that our maximal matching algorithm can be adjusted to work in the distributed setting (in particular the standard \congest model~\cite{Peleg2000DistributedApproach}, where per round one $O(\log n)$-bit message can be sent along each edge): it would process each batch of updates in $\poly(\log n)$ rounds while communicating amortized $\poly(\log n)$ messages per update. We defer the details of this adaptation to the full version of our paper.

\section{Preliminaries}

\label[section]{section:Preliminaries}

\paragraph{Hypergraphs and maximal matching.}
We work with a hypergraph $H=(V, E)$ where $V$ denotes the set of vertices and $E$ denotes the set of hyperedges. Each hyperedge $e\in E$ is a subset of $V$, of cardinality at most $r$. We refer to $r$ as the rank of this hypergraph. Also, for a vertex $v\in V$ and an edge $e\in E$, the notation $v\in e$ means that $v$ is one of the at most $r$ endpoints of $e$. In the course of our algorithm, often, when it is clear from context, we will refer to the hypergraph and its hyperedges as simply the graph and its edges.

A matching $\mathcal{M}\subseteq E$ is a collection of hyperedges in $H$ that are disjoint, i.e., hyperedges that do not share an endpoint. The size of the matching is the number of hyperedges in it. A matching $\mathcal{M}$ in $H$ is maximal if there is no other matching $\mathcal{M}'$ in $H$ such that $\mathcal{M} \subset \mathcal{M}'$. Notice that any maximal matching has size at least $1/r$ factor of the maximum matching. 
A set $S\subset V$ of vertices is called a vertex cover if each edge in $E$ has at least one endpoint in $S$. For a maximal matching $\mathcal{M}$, the set of all endpoints of edges in $\mathcal{M}$ forms a vertex cover, and its size is at most $r$ times the minimum-size vertex cover. 

\paragraph{Update model.} The updates we consider on the hypergraph are edge insertions and deletions. We note that this is often called the fully dynamic model, to distinguish it from insertions-only or deletions-only setups. Notice also that each update requires $\Theta(r)$ words to describe the up to $r$ vertices of the hyperedge. Also, as mentioned before, our algorithm is randomized and works against an oblivious adversary (this is common also to the sequential dynamic algorithms for maximal matching mentioned before~\cite{baswana2011fully, solomon2016fully,assadi2021fully}). That is, the updates are determined by an adversary that can know our algorithm but does not know the randomness used by our algorithm. We consider that in each iteration, a batch of updates arrives, of an arbitrary size. We want the parallel algorithm to finish processing the entire batch in $\poly(\log n)$ depth, and to output all the changes to the maximal matching (it also updates internal data structures that it maintains).

\paragraph{Parallel dictionary.} To keep sets efficiently with parallel updates, we make use of a parallel dictionary data structure. Such a data structure supports batch insertions, batch deletion, and batch look-ups of elements from some universe with hashing. Gil et al.\cite{gil1991towards} describe a parallel dictionary that uses space linear in the number of items present in the dictionary, and for any batch of $k$ operates, performs $O(k)$ work and $O(\log^{*} k)$ depth, with probability $1-1/\poly(k)$. We sometimes will need a stronger probabilistic guarantee. For any $N \geq k$, we can get success probability $1-1/\poly(N)$ by increasing the work and depth bounds to $O(k \log N)$ work and $O(\log N)$, respectively. Moreover, since the space is linear in the number of items present in the data structure, we can also report all those items in depth $O(1)$ and work linear in the number of items.

\paragraph{Static Parallel Hypergraph Maximal Matching}
Sometimes in our algorithm, we need to build a maximal matching for some subsets of edges from scratch. For that, we use Luby's algorithm from \cite{luby}, for finding a maximal independent set in a graph. Note that finding maximal matching in a hypergraph $H=(V, E)$ can be reduced to finding a maximal independent set in the graph whose vertices are edges from $E$, and where any two of these vertices (former "hyperedges") $e$ and $e'$ are considered adjacent if they share a vertex in $H$.

\begin{center} \fbox{\begin{minipage}{0.97\textwidth}
\paragraph{Luby's MIS algorithm.} Given a graph on $n$ nodes, repeat the following process until there are no more vertices left. 

\begin{enumerate}
    \item Let $X_1, X_2, \ldots, X_n$ be random real numbers from range $[0, 1]$.
    \item Determine $I$: the set of nodes $v$, such that $X_v$ is larger than $X_u$ for any neighbor $u$ of $v$. 
    \item Add set $I$ to the independent set, and remove $I$ and all neighbors of $I$ from the graph.
\end{enumerate}  
\end{minipage}}\end{center}


\begin{theorem}[Luby~\cite{luby}]
 The process terminates in $O(\log{n})$ iterations with probability at least $1-1/\poly(n)$.
\end{theorem}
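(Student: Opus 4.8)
The plan is to prove that a single iteration of Luby's loop deletes, in expectation, a constant fraction of the edges currently present, and then to turn this into the stated high-probability bound by iterating $\Theta(\log n)$ times. Fix an iteration, let $G=(V,E)$ be the current graph with $m=|E|$, and (following the classical analysis) call a vertex $v$ \emph{good} if at least $d(v)/3$ of its neighbours $u$ satisfy $d(u)\le d(v)$, and call an edge \emph{good} if at least one of its endpoints is good. First I would show that at least $m/2$ edges are good, by a charging argument: orient every edge towards its higher-degree endpoint (breaking ties arbitrarily); a \emph{bad} vertex has fewer than $d(v)/3$ in-edges, so the number of edges with both endpoints bad is less than $\tfrac13\sum_{v\text{ bad}}d(v)$, and combining this with the identity $\sum_{v\text{ bad}}d(v)=2\cdot(\#\text{bad--bad edges})+(\#\text{bad--good edges})$ gives $\#\text{bad--bad edges}<\#\text{bad--good edges}\le\#\text{good edges}$, hence more than half the edges are good. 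Next I would show that every good vertex $v$ with at least one incident edge is deleted this iteration --- meaning $v\in I$ or a neighbour of $v$ lies in $I$ --- with probability at least an absolute constant $c_0>0$: such a $v$ has a set $L(v)$ of at least $d(v)/3$ neighbours of degree $\le d(v)$, each such $u$ joins $I$ with probability exactly $1/(d(u)+1)\ge 1/(d(v)+1)$ since the $X$-values are a.s.\ distinct, so the expected number of neighbours of $v$ entering $I$ is at least $\tfrac13\cdot\tfrac{d(v)}{d(v)+1}\ge\tfrac16$, and an inclusion--exclusion estimate that accounts for the correlations among the events $\{u\in I\}$, $u\in L(v)$, upgrades this to $\Prob{v\text{ deleted}}\ge c_0$. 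Since a good edge is deleted whenever either of its good endpoints is, every good edge is deleted with probability $\ge c_0$, so $\E[\#\text{edges deleted}]\ge c_0\cdot\tfrac m2$, i.e.\ $\E[\,|E_{\mathrm{next}}|\,]\le(1-c_0/2)\,|E|$.

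To boost this, let $m_i$ be the number of edges at the start of iteration $i$. By the above and the law of total expectation, $\E[m_i]\le(1-c_0/2)^{\,i-1}m_1\le(1-c_0/2)^{\,i-1}\binom n2$, so taking $i=C\log n$ with $C$ a large enough constant yields $\E[m_i]\le n^{-10}$, and Markov's inequality gives $\Prob{m_i\ge 1}\le n^{-10}$. Once the edge set is empty every surviving vertex is isolated, so all of them join $I$ in the very next iteration and are removed; therefore with probability at least $1-n^{-10}$ the process halts within $C\log n+1=O(\log n)$ iterations, and the exponent $10$ can be replaced by any constant by enlarging $C$.

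The only non-routine step is the claim that a good vertex is killed with constant probability. The first-moment part is immediate, but the events $\{u\in I\}$ for the low-degree neighbours $u\in L(v)$ are \emph{positively} correlated --- each forces $X_u>X_v$, and two such neighbours may share further common neighbours --- so a plain union/Bonferroni bound is too lossy when $v$ has many low-degree neighbours of very small degree. The standard remedies are to condition on the $X$-values restricted to the shared neighbourhoods so that the relevant events become conditionally independent, or to rerun the argument in the equivalent ``mark each $v$ independently with probability $1/(2d(v))$ and resolve conflicting marks in favour of the higher-degree endpoint'' formulation, in which the marks are independent and the two-term inclusion--exclusion closes cleanly. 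The rest --- the charging bound for good edges, linearity of expectation, and the Markov boosting --- is routine.
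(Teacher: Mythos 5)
The paper does not actually prove this statement; it imports it as a black box from Luby's paper, so there is no in-paper proof to compare yours against. Judged on its own, your outline is the canonical good-vertex/good-edge analysis of Luby's algorithm: define $v$ to be good if at least $d(v)/3$ of its neighbours have degree $\le d(v)$, show via the edge-orientation charging argument that at least half the edges are good, show a good vertex is deleted with probability at least an absolute constant $c_0$, conclude $\E[\,|E_{\mathrm{next}}|\,]\le(1-c_0/2)|E|$, and boost via the law of total expectation plus Markov. The charging step and the Markov boosting are written out correctly (including the observation that once $E=\emptyset$ all remaining vertices enter $I$ in one more round, and that the polynomial in $1-1/\poly(n)$ can be tuned by enlarging the constant in $C\log n$).

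The one step you flag rather than carry out --- a good vertex is killed with constant probability --- is indeed the only non-routine part, and your diagnosis is accurate: a two-term Bonferroni bound genuinely fails here. If $v$ has $d(v)$ neighbours of degree one, then $\sum_{u\in L(v)}\Pr[u\in I]=d(v)/2$ while $\sum_{u\ne u'}\Pr[u\in I\wedge u'\in I]\approx\binom{d(v)}{2}/3$, so the truncated inclusion--exclusion goes negative even though the true probability is $1-1/(d(v)+1)$. Your proposed repairs (pass to the independent-marking formulation with marking probability $1/(2d(v))$, or condition so the relevant events decouple) are the standard ones and both work. For the random-reals formulation you actually wrote, there is an even cleaner fix worth knowing: for each $u\in L(v)$ define $A_u$ to be the event that $X_u$ is the maximum over $N(u)\cup\{u\}\cup N(v)\cup\{v\}$. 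Each $A_u$ implies $u\in I$, the $A_u$ are pairwise disjoint (every two of these sets contain both $u$ and $u'$, so the maxima cannot coincide), and $\Pr[A_u]\ge 1/(d(u)+d(v)+2)\ge 1/(2d(v)+2)$, giving $\Pr[v\text{ deleted}]\ge |L(v)|/(2d(v)+2)\ge 1/12$ whenever $d(v)\ge1$. With that (or either of your proposed remedies) spelled out, your write-up is a complete and correct proof of the cited bound.
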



\begin{theorem} 
\label[theorem]{theorem:MaximalMatching}
There is a randomized parallel algorithm that, in any hypergraph of rank $r$, computes a maximal matching in depth $O(\log{M})$ and work $O(Mr \cdot \log{M})$, with high probability. Here, $M$ denotes the total number of hyperedges.
\end{theorem}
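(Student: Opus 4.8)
The plan is to reduce hypergraph maximal matching to maximal independent set (MIS) on an auxiliary graph exactly as sketched in the preliminaries, and then to run Luby's algorithm on that auxiliary graph—but crucially \emph{without ever materializing the auxiliary graph}, since it could have up to $\Theta(M^2)$ edges, which would blow up the work bound. Let $G'$ be the "conflict graph" whose nodes are the $M$ hyperedges of $H$, with two hyperedges adjacent iff they share a vertex of $H$. A maximal independent set in $G'$ is precisely a maximal matching in $H$. Luby's theorem (the preceding theorem) guarantees that Luby's algorithm on an $m$-node graph terminates in $O(\log m)$ rounds w.h.p.; here $m \le M$, so $O(\log M)$ rounds suffice. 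It remains to implement a single round of Luby's algorithm on $G'$ in $O(\log M)$ depth and $O(Mr\log M)$ work, after which summing over $O(\log M)$ rounds gives the claimed bounds (the per-round work only decreases as hyperedges are removed, so the geometric-type sum is dominated by the first round up to the $O(\log M)$ factor—or one can simply bound every round by the first).

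The key step is implementing one Luby round over $G'$ using only the hypergraph representation. Each surviving hyperedge $e$ draws an independent uniform $X_e \in [0,1]$ (in the RAM/PRAM model, enough random bits so that all values are distinct w.h.p., i.e., $O(\log M)$ bits each). To decide which hyperedges enter $I$, I would, for each vertex $v \in V$, gather the list of surviving hyperedges incident to $v$ and compute the maximum $X_e$ among them; a hyperedge $e$ is in $I$ iff for \emph{every} endpoint $v \in e$, $X_e$ equals that per-vertex maximum. Computing all per-vertex maxima is a segmented-reduction/sorting task over the incidence pairs $(v,e)$, of which there are at most $Mr$; this takes $O(Mr\log M)$ work and $O(\log M)$ depth (e.g., sort the incidence pairs by vertex, then prefix-max within each segment). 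Then each hyperedge checks its $\le r$ endpoints against the computed maxima in $O(r)$ work and $O(\log r)$ depth. Removing $I$ together with all neighbors of $I$ amounts to: mark every vertex that is an endpoint of some hyperedge in $I$ (again a reduction over incidence pairs), then delete every hyperedge containing a marked vertex. All of these are handled with the parallel dictionary of Gil et al.\ (cited above) for maintaining the set of surviving hyperedges and incidence lists under batch deletions, at $O(\log M)$ work per element and $O(\log M)$ depth per batch, with failure probability $1/\poly(M)$.

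For the probabilistic guarantee: Luby's theorem gives $1-1/\poly(m)$ that the round count is $O(\log m)$; since we want failure probability $1/\poly(M)$ and $m$ could be as small as a constant in later rounds, I would invoke Luby's bound with the padded parameter (equivalently, observe that if the node count drops below, say, $\sqrt M$ we can afford to finish the remaining instance by brute force, or just note that each round independently halves the expected number of surviving edges so a Chernoff bound over $O(\log M)$ rounds gives high probability in $M$). Each dictionary operation likewise fails with probability $1/\poly(M)$ by using the strengthened bound quoted in the preliminaries; a union bound over the $O(\log M)$ rounds and $O(M)$ operations keeps the total failure probability at $1/\poly(M)$.

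The main obstacle—and the only place where a naive approach fails—is avoiding the explicit construction of the conflict graph $G'$, which would cost $\Omega(M^2)$ work in the worst case (a single vertex of degree $M$ already creates $\binom{M}{2}$ conflict edges). Everything hinges on the observation that the two graph-level primitives Luby needs, namely "is $X_e$ a local maximum?" and "remove $I$ and its neighborhood," can both be expressed purely through per-vertex aggregation over the $O(Mr)$ hyperedge–vertex incidences, so they never touch $G'$ directly. Once that is in place, the depth and work accounting is routine.
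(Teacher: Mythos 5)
Your approach is the same as the paper's: run Luby's algorithm on the (implicit) conflict graph $G'$, draw a random value per surviving hyperedge, select the ones that are local maxima via per-vertex aggregation over the $O(Mr)$ incidence pairs, remove the chosen hyperedges and their neighborhoods, and repeat for $O(\log M)$ rounds. Your observation that $G'$ must never be materialized is exactly the right one and is implicit in the paper's per-iteration cost of $O(Mr)$.

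Where your accounting goes wrong is in the per-round cost and the attempted amortization. You implement a round via sorting the incidence pairs and parallel-dictionary operations, costing $O(Mr\log M)$ work and $O(\log M)$ depth; with $O(\log M)$ rounds that is $O(Mr\log^2 M)$ work and $O(\log^2 M)$ depth, one $\log M$ factor over the theorem in each. The patch you offer—that per-round work decays geometrically so the sum collapses to the first round—is not justified: Luby's guarantee is that the number of \emph{conflict-graph edges} halves per round in expectation, not the number of surviving hyperedges or incidence pairs, and the latter is what drives your per-round cost. A conflict graph that is a long path keeps $\Theta(M)$ vertices alive for $\Theta(\log M)$ rounds, so the per-round work need not shrink. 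The paper's proof sidesteps both losses by doing a round with plain per-vertex incidence arrays and a direct reduction, with no sort and no dictionary, at $O(Mr)$ work (and the depth it claims) per iteration; lazy deletion (marking removed hyperedges rather than physically deleting them) suffices since the procedure is a one-shot static computation. The algorithm you describe is correct; the implementation is over-engineered in a way that costs a $\log M$ factor, and the argument offered to absorb it does not hold.
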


\begin{proof}
    We perform Luby's algorithm for the hyperedges of the graph. In each iteration, we generate a random number from $[0, 1]$ for each edge, then keep only local maximums as edges chosen to join the matching. We then remove all hyperedges incident to the chosen edges. Each iteration can be done in parallel with depth $O(1)$ and work $O(Mr)$. This process simulates finding the maximal independent set in a graph, whose vertices are hyperedges from $E$, and any two such vertices (former hyperedges) $e$ and $e'$ share an edge in the graph iff they shared a vertex in the hypergraph. Since this graph has $M$ vertices, the algorithm terminates in $O(\log{M})$ iterations, with high probability.
\end{proof}

\section{Algorithm}

\subsection{Intuitive discussions and algorithm overview}
Computing a (hypergraph) maximal matching in the sequential setting is trivial (even in time linear in the number of edges). However, that is not the case if we want a \textit{dynamic sequential algorithm} for it with $\poly(\log n)$ work per update, or if we want a low-depth and work-efficient \textit{static parallel algorithm} for it, i.e., with $\poly(\log n)$ depth, and $\poly(\log n)$ work per edge. The parallel dynamic algorithm that we desire in this paper is stronger than both of these---it would directly imply a sequential dynamic algorithm if we just run it on one processor, and it would directly imply the static parallel algorithm if we just insert all the edges at once. We borrow ideas from the known solutions to both of these problems, and we need some new ingredients in addition.

On a high level, our algorithm follows the general structure of the sequential dynamic algorithms for maximal matching~\cite{baswana2011fully,solomon2016fully,assadi2021fully}---notably the leveling scheme that is common to all of these. However, it also involves adaptations that enable us to process several updates simultaneously, in parallel. Our notation will be close to that of \cite{assadi2021fully}, to allow the generality for hyperedges; though in parts our algorithm behaves closer to that of ~\cite{baswana2011fully}, with new ingredients that give way for parallelism. 

The main challenge for any dynamic maximal matching algorithm is handling deletions of hyperedges. If the hyperedge $e$, which is to be deleted, is not in the current matching, the task is easy. But if $e$ is in the matching, we might need a lot of work: Naively, for each node $v \in e$, we should look for a hyperedge incident to $v$ such that all of its nodes are currently unmatched. If we find such a hyperedge, we add it to the matching. However, going through the entire incidence list of $v$ is expensive, as each node $v$ might have many incident edges (and we have to check for each endpoint of $e$). 

A basic idea is that if the matching edge $e$ was sampled uniformly at random from some large subset $S$ of hyperedges incident to $v$, then intuitively (and certainly relying on the oblivious adversarial model) we expect that around $\frac{|S|}{2}$ of edges of $S$ will get deleted before $e$. Since deleting these edges is easy, this should give us enough budget to amortize the expensive work of processing the deletion of the matching edge $e$. This idea is at the core of all prior work~\cite{baswana2011fully,solomon2016fully,assadi2021fully}. However, turning this raw intuition into an actual dynamic algorithm requires careful structures.

A key structure used by \cite{baswana2011fully,solomon2016fully,assadi2021fully}, which we will also utilize, is to keep vertices at different ``levels'' based on how many incident edges they had (the timing of this needs care) and thus intuitively how expensive it will be to process an incident matching edge deletion. More concretely, we use a \textit{leveling scheme} with the following intuitive properties (the precise definition is slightly different and is presented later). The edges of the matching are going to be placed at $L = \log{M}$ levels, where for an edge $e\in \mathcal{M}$ at level $\ell$, we have the following two properties: 

\begin{itemize}
    \item If $e$ is deleted, it would take the algorithm roughly $\tilde{\Theta}(r^{\ell})$ work to restore the maximality of the matching;
    \item $e$ was sampled from $\tilde{\Theta}(r^{\ell})$ edges, all of which were incident to the same vertex. The edges that weren't sampled don't violate the maximal matching, and in expectation roughly $\tilde{\Theta}(r^{\ell})$ edges get deleted before $e$. This would give us the budget for processing the deletion of the matching edge $e$ once that event occurs.
\end{itemize}
Moreover, due to the mechanics of how these levels are maintained, sometimes adding an edge to a matching might kick out other matching edges and this creates a cascade of work to be handled, and for which we need to have a budget of amortized work to pay.

\paragraph{Difficulties in parallelism, when handling multiple updates simultaneously.} The main challenge in adapting these general approaches to the parallel setting is the need to find new matching edges for many nodes at once if many matching edges get deleted. These matching edges cannot be arbitrary; they have to have enough properties to allow us to conclude a budget creation for each matching edge, similar to the above. Unlike the sequential setting, in which the sets from which each matching edge is sampled are disjoint, here these sets might heavily intersect with a complex structure. 

To remedy this problem, we will compute the matching edges in many iterations, each time making a small amount of progress. One ingredient is that we slightly relax the conditions and allow for smaller sizes of the sets from which matching edges were sampled. On a high level, we will argue that if a matching edge $e$ gets selected from set $S$ during a corresponding parallel process of matching edge selection, all edges incident to $e$ have comparable probabilities to be in the matching, where probabilities will be equal up to roughly $\poly(r)$ factors. As such, we will be able to still create enough budget to handle each expensive edge deletion. 

Furthermore, whether a node's level should be adjusted or not usually depends on the presumption that other nodes are staying at their current level. But when we want a parallel algorithm that processes many updates at the same time, this assumption will not be true and other nodes will try to move their levels at the same time. We will need mechanisms to ensure that the movements made by one node, under the presumption of others staying at their position, maintain the invariants approximately even when others are moving simultaneously. Finally, we will need extra details and careful arguments to show that we have probabilistic concentrations about these desired properties (rather than just the expectations, which are often considerably simpler). 

\subsection{General structure: levels, temporary deletions, and data structures}
This subsection describes the general structure that the algorithm preserves, the corresponding data structures, and the corresponding procedures we use for making changes and updating these data structures. The next subsection describes how our dynamic parallel algorithm makes use of these structures to process the updates to the hypergraph. As mentioned before, this overall outline builds on prior work~\cite{baswana2011fully,solomon2016fully,assadi2021fully}, and we keep our notation and terminologies mostly similar to those written in \cite{assadi2021fully}.

In general, our algorithm will maintain a maximal matching $\mathcal{M}$.  We will call hyperedges from $\mathcal{M}$ \textit{matched}; other hyperedges will be called \textit{unmatched}. We will call a node \textit{matched} if it is incident to any matched hyperedge, and \textit{unmatched} otherwise.

\subsubsection{A Leveling Scheme}
Lets us define $$\alpha = 4\cdot r, \qquad L = \lceil \log_{\alpha} N \rceil$$ 

Here, $N$ is a constant-approximate upper bound for the number of vertices plus the total number of updates (the total number of times edges were inserted/erased). Once $N$ more updates arrive, we can double $N$, rebuild all data structures from scratch, and charge all the previous work to the $N$ new updates. This way, every edge will be charged at most twice. We may therefore treat $N$ as a fixed value in the remainder of the paper.


\paragraph{Leveling scheme} We assign to each node $v \in V$ and each hyperedge $e \in E$ a level. We denote these by $\ell(v), \ell(e)$ correspondingly. These levels satisfy the following key invariant with respect to the maximal matching $\mathcal{M}$ that we maintain.

\begin{invariant} 

\label[theorem]{theorem:invariant1}

The leveling scheme will have the following properties:
    \begin{enumerate}
    \item For any hyperedge $e \in E$, we have $0 \leq \ell(e) \leq L$, and for any vertex $v\in V$, we have $-1 \leq \ell(v) \leq L$, such that $\ell(v) = -1$ if and only if $v$ is unmatched.
    \item For any matched hyperedge $e \in \mathcal{M}$ and any incident $v \in e$, $\ell(v) = \ell(e)$.
    \item For any unmatched hyperedge $e \notin \mathcal{M}$, we have $\ell(e) = \max_{v \in e} \ell(v)$.
\end{enumerate}
\end{invariant}

\paragraph{Hyperedge ownership} Each hyperedge will be \textit{owned} by one of its nodes with the highest level. If there are many such nodes, ties can be broken arbitrarily. We will denote the set of hyperedges owned by the node $v$ by $\mathcal{O}(v)$. 

\subsubsection{Temporarily Deleted Hyperedges}
Sometimes, we will consider some hyperedges \textit{temporarily deleted}. We will not keep these temporarily deleted hyperedges in any other data structures. We will guarantee the following invariant:

\begin{invariant}
Any hyperedge that is temporarily deleted is incident on a matched hyperedge.
\end{invariant}

Every matched hyperedge $e \in \mathcal{M}$ will maintain a set of deleted hyperedges $\mathcal{D}(e)$, for which it is responsible. When the matched edge $e$ gets deleted, we will simply insert all hyperedges from $\mathcal{D}(e)$ back to the graph.

\paragraph{Intuition Behind Temporarily Deleted Edges.} The set $\mathcal{D}(e)$ roughly represents the set of hyperedges from which a matching hyperedge $e$ was sampled. Thus, we expect that before $e$ gets deleted, some significant fraction of $\mathcal{D}(e)$ gets deleted. This would allow us to amortize the cost of processing the deletion of $e$ when that happens to all edges of $\mathcal{D}(e)$ that were deleted prior to that. It is important that the edges in $\mathcal{D}(e)$ do not participate in the rest of the algorithm before $e$ gets deleted. In particular, these edges cannot be in the set $\mathcal{D}(e')$ for any other matching edge $e'$ before $e$ gets deleted. This ensures that the ``budget'' coming from the deletion of edges in $\mathcal{D}(e)$ is not allocated multiple times.

\subsubsection{Data structures}
In the description of the data structures that follow, whenever we say set, we mean the parallel dictionary discussed in \cref{section:Preliminaries}. We are only going to use the following interface for such a set $S$:

\begin{itemize}
    \item \texttt{insert($S_{insert}$)}: given set $S_{insert}$ of elements, insert all of them to $S$. This operation has depth $O(\log{N})$ and work $O(S_{insert}\log{N})$ with high probability;
    \item \texttt{erase($S_{erase}$)}: given set $S_{erase}$ of nodes, each of which is currently in $S$, erase all of them from $S$. This operation has depth $O(\log{N})$ and work $O(S_{erase}\log{N})$ with high probability;
    \item \texttt{retrieve()}: return the set of all elements that are currently in $S$ (in practice, this would mean assigning each of them to a separate processor). This operation has depth $O(\log{N})$ and work $O(S\log{N})$ with high probability.
\end{itemize}

To keep the structures mentioned earlier, similar to  \cite{assadi2021fully}, we use the following data structures.

\paragraph{Data structures for vertices.}

For each vertex $v \in V$ we maintain the following information:

\begin{itemize}
\setlength\itemsep{0.5em}
    \item $\ell(v):$ the level of $v$ in the leveling scheme;
    \item $\mathcal{M}(v)$: the hyperedge in $\mathcal{M}$ incident on $v$ (if $v$ is unmatched $\mathcal{M}(v) = \perp$);
    \item $\mathcal{O}(v):$ the set of hyperedges $e$ owned by $v$; we define $o_v := |\mathcal{O}(v)|$;
    \item $\mathcal{N}(v):$ the set of hyperedges $e$ incident on $v$;
    \item $\mathcal{A}(v, \ell)$ for any integer $\ell \geq \ell(v)$: the set of hyperedges $e \in \mathcal{N}(v)$ that are \textit{not} owned by $v$ and have level $\ell(e) = \ell$; we define $a_{v, \ell} = |\mathcal{A}(v, \ell)|$;
    \item $\tilde{o}_{v, \ell}$ for every integer $\ell > \ell(v)$: the number of hyperedges $v$ will own \textit{if} we increase its level from $\ell(v)$ to $\ell$. Note that if we increase level of $v$ to $\ell$, then for any edge $e\in \mathcal{A}(v, \ell')$ with $\ell(v) \leq \ell' < \ell$, node $v$ will become the node with highest level in $e$, so $e$ would then be owned by $v$. This means that $\tilde{o}_{v, \ell} = (\sum_{\ell' = \ell(v)}^{\ell-1} a_{v, \ell'})+ o_v$. Another issue to point out is that this definition assumes $v$ moves up but all other nodes remain at their current level. In our parallel algorithms, many nodes would want to move up at the same time, and thus we need more care in using this definition. 
 \end{itemize}
\noindent We also introduce the following notation, but the corresponding data structure will \textit{not} be maintained explicitly:

    \begin{itemize}
        \item $\tilde{\mathcal{O}}_{v, \ell}$ for every integer $\ell > \ell(v)$: the set of hyperedges that $v$ will own \textit{if} we increase its level from $\ell(v)$ to $\ell$. In other words, $\tilde{O}_{v, \ell}$ is a union of $\mathcal{O}(v)$ and $\mathcal{A}(v, \ell')$ for $\ell(v) \leq \ell' < \ell$. Note that $\tilde{o}_{v, \ell} = |\tilde{\mathcal{O}}_{v, \ell}|$. We do \textbf{not} explicitly maintain these sets: when we need to access all edges of $\tilde{\mathcal{O}}_{v, \ell}$, we can do this through $\mathcal{O}(v)$ and $\mathcal{A}(v, \ell')$ for $\ell(v) \leq \ell' < \ell$.
    \end{itemize}

\paragraph{Data structures for edges.} For each hyperedge $e \in E$ we will maintain the following information:

\begin{itemize}
\setlength\itemsep{0.5em}
    \item $\ell(e):$ the level of $e$ in the leveling scheme;
    \item $O(e):$ the single vertex $v \in e$ that owns $e$, i.e., $e \in \mathcal{O}(v)$;
    \item $\mathcal{M}(e):$ a Boolean variable to indicate whether or not $e$ is matched;
    \item $\mathcal{D}(e):$ a set of deleted hyperedges for which it is responsible.
 \end{itemize}


\paragraph{Data structures for rising nodes per level.} We will also maintain the following data structure for each $\ell \in [0, L]$ about the nodes who may rise to level $\ell$. We note that this was not needed in the sequential algorithm \cite{assadi2021fully}, and we use it in our work as the set of nodes that need to rise in level changes in a more complex way when we process multiple such movements simultaneously.

\begin{itemize}
    \item $S_{\ell}:$ the set of nodes $v$ with with $\ell(v)<\ell$ and $\tilde{o}_{v, \ell} \geq \alpha^{\ell}$.
\end{itemize} 



\subsubsection{Procedures for changing levels and updating data structures}

Next, we introduce the key procedures for updating these data structures, when some nodes change their levels. We later describe how the dynamic algorithm performs the updates by invoking these procedures. We use superscript $*^{old}$ to denote a parameter or data structure $*$ before the update and $*^{new}$ to denote a parameter or data structure $*$ after the update. 

\begin{center} \fbox{\begin{minipage}{0.97\textwidth}
	\textbf{Procedure \texttt{set-owner($e, v$)}.} Given a hyperedge $e$ and vertex $v \in e$ where $\ell(v) = \max_{u \in e} \ell(u)$, the procedure sets the owner of edge $e$ to be node $v$, i.e., $O(e) = v$.
\end{minipage} } \end{center}

Implementing \texttt{set-owner($e, v$)} requires updating only the following: (1) sets $\mathcal{A}(v, \ell)$ for $v \in e$, (2) values of $\tilde{o}(v, \ell')$ for $v\in e, \ell'>\ell$, and (3) the corresponding sets $S_\ell'$. If we have to perform several such procedures in parallel, we would have to update all these data structures in parallel, and we will ensure the following guarantee for the related computational depth and work.

\begin{claim} 
    For any $x\geq 1$, processing $x$ such procedures in parallel takes $O(\log{N})$ depth and $O(xr\log{N})$ work.
\end{claim}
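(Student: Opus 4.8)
The plan is to analyze the work and depth of \texttt{set-owner} one data structure at a time, and then argue that all $x$ invocations can be batched so that the costs add up rather than multiply. Fix a single call \texttt{set-owner($e, v$)} with old owner $u = O^{old}(e)$ (possibly $u = v$, in which case nothing changes). Changing the owner from $u$ to $v$ affects exactly the following: (1) the edge $e$ leaves the set $\mathcal{A}(u, \ell(e))$ only if $\ell(u) > \ell(v)$ — but by the precondition $v$ has the maximum level in $e$, so actually $e$ was previously counted as owned by $u$ and not in any $\mathcal{A}(u,\cdot)$, while after the change, if $\ell(v) > \ell(e)$ cannot happen since $\ell(e)=\max_{w\in e}\ell(w)$ for unmatched $e$; the careful bookkeeping is that $e$ moves into or out of $\mathcal{A}(w, \ell(e))$ for the $O(r)$ vertices $w \in e$ according to whether $w$ is the owner. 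Concretely, at most $2r$ set-membership changes occur across the sets $\{\mathcal{A}(w, \ell(e)) : w \in e\}$. (2) For each $w \in e$ and each level $\ell' > \ell(e)$, the value $\tilde{o}_{w, \ell'}$ changes by $\pm 1$ (it counts $e$ iff $w$ is not the current owner and $\ell(e) \le \ell' - 1$ — wait, more precisely $\ell(w) \le \ell(e) = \ell' \cdot$-range); since there are at most $r$ vertices and at most $L = O(\log N)$ levels above, this is $O(r\log N)$ integer updates. (3) Whenever such a $\tilde{o}_{w,\ell'}$ crosses the threshold $\alpha^{\ell'}$, node $w$ must be inserted into or removed from $S_{\ell'}$; again at most $O(r\log N)$ such membership changes.

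For the batched version with $x$ simultaneous calls, I would first collect, across all $x$ calls, the multiset of all elementary modifications described above: this is $O(xr)$ changes to the $\mathcal{A}$-dictionaries, $O(xr\log N)$ increments/decrements to the $\tilde{o}$ counters, and $O(xr\log N)$ changes to the $S_\ell$-dictionaries. The $\tilde{o}$ counters are plain integers stored per (vertex, level) pair; multiple calls may touch the same counter, so I would aggregate the net change to each counter by sorting/grouping the update list by the pair $(w,\ell')$ — this is $O(xr\log N \cdot \log(xr\log N)) = O(xr\log^2 N)$ work and $O(\log N)$ depth via a parallel sort, or alternatively $O(xr\log N)$ work and $O(\log N)$ depth using a parallel semigroup-sum (prefix-sum) primitive on the grouped array. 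Having computed the net delta and hence the new value of each affected counter, I determine in $O(1)$ depth per counter whether it crossed the threshold $\alpha^{\ell'}$, producing the batched insert/erase lists for each $S_\ell$. Finally I apply all the batched \texttt{insert} / \texttt{erase} / operations to the $\mathcal{A}$- and $S_\ell$-dictionaries; by the parallel dictionary guarantees recalled in \Cref{section:Preliminaries}, a batch of $k$ operations costs $O(k\log N)$ work and $O(\log N)$ depth with high probability. Summing: total work $O(xr\log N)$ for the dictionary operations plus $O(xr\log N)$ (or $O(xr\log^2 N)$) for the counter aggregation, and total depth $O(\log N)$. Absorbing constants and the extra $\log$ if the sorting route is used — or noting that the paper's $O(xr\log N)$ target presumes the prefix-sum route — gives the claimed $O(\log N)$ depth and $O(xr\log N)$ work.

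The main obstacle is the conflict resolution when many of the $x$ calls touch the same counter $\tilde{o}_{w,\ell'}$ or the same dictionary $S_\ell$ simultaneously: naively each call wants to read-modify-write, which would serialize and blow up the depth to $\Omega(x)$. The resolution, as sketched, is to treat the whole batch declaratively — gather all deltas, aggregate them by key with a single parallel grouping/prefix-sum pass, and only then commit one net change per key — so that the depth stays $O(\log N)$ regardless of how concentrated the updates are on a few keys. A secondary subtlety is making sure the \emph{old} owner of each $e$ is read consistently: since the $x$ calls are issued together as a batch, I should record each $O^{old}(e)$ from the state \emph{before} any of the calls take effect (the superscript-$old$/$new$ convention), so that two calls never race on the owner field of the same edge; in our usage the $x$ edges are distinct, so $O(e)$ is written at most once and this is automatic. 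Everything else — the per-call count of $O(r\log N)$ affected data-structure entries, and the correctness of the $\tilde{o}$ and $S_\ell$ updates — follows directly from the definitions in the preceding subsection.
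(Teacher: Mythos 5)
Your proposal follows the same basic decomposition as the paper: identify the three data structures touched by a \texttt{set-owner} call ($\mathcal{A}$, $\tilde{o}$, $S_\ell$), batch the elementary changes across all $x$ calls, and apply them via the parallel dictionary and a prefix-sum primitive. The main difference is emphasis: you spend most of your effort on conflict resolution (aggregating concurrent $\pm 1$ deltas on the same counter $\tilde{o}_{w,\ell'}$), which the paper's one-line proof does not discuss at all, while the paper's proof instead leans on a parallel prefix-sum structure to propagate the ownership change across levels, quoting $O(\log L)$ depth and $O(xr\log L)$ work for the $\tilde{o}$ updates (compared with your flat enumeration of $O(xr\log N)$ scalar updates). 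Both are within the claimed bounds, and your worry about collisions is legitimate and a useful addition.

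Two loose ends in your accounting are worth flagging. First, your sorting-based aggregation costs $O(xr\log^2 N)$, which overshoots the target; the alternative you gesture at (``prefix-sum on the grouped array'') still needs a grouping-by-key step, and you do not say how that grouping is obtained within $O(xr\log N)$ work --- a hashing-based parallel semisort or the paper's per-node prefix-sum structure would close this, but as written it is a gap. Second, you count $O(r\log N)$ threshold crossings into $S_\ell$ per call; pushing $O(xr\log N)$ batched insert/erase operations through a dictionary that costs $O(k\log N)$ for a batch of size $k$ gives $O(xr\log^2 N)$, not the $O(xr\log N)$ you write in your summation. You should either argue that the number of $(w,\ell')$ pairs that actually cross the threshold $\alpha^{\ell'}$ is $O(xr)$ (not $O(xr\log N)$) or tighten the per-level reasoning; note in fact that only the \emph{old} owner's counters $\tilde{o}_{u,\ell'}$ change (and only for $\ell(u) < \ell' \le \ell(e)$), the new owner's $\tilde{o}_{v,\ell'}$ are unchanged because the edge just moves from $\mathcal{A}(v,\ell(v))$ to $\mathcal{O}(v)$, both of which already contribute --- this observation sharpens the count and is missing from your first paragraph, where the bookkeeping of which sets the edge enters and leaves trails off.
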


\begin{proof}
    Updating $\mathcal{A}(v, \ell)$ for $v \in e$ for all of them takes $O(\log{N})$ depth and $O(xr\log{N})$ work. Updating the corresponding $\tilde{o}(v, \ell')$ for $v\in e, \ell'>\ell$ takes $O(\log{L})$ depth and $O(xr\log{L})$ work with the parallel prefix sums algorithm from \cite{parallelprefixsum}. Updating the sets $S_{\ell'}$ takes $O(\log{N})$ depth and $O(xr\log{N})$ work.
\end{proof}

\begin{center} \fbox{\begin{minipage}{0.97\textwidth}
	\textbf{Procedure \texttt{set-level($v, \ell$)}.} Given a vertex $v\in V$ and integer $-1 \leq \ell \leq L$, this procedure sets $\ell(v) = \ell$.
\end{minipage} } \end{center}


Let us describe how we would handle a single such procedure. First, we determine ownership of all hyperedges $e\in \mathcal{O}^{old}(v)$, by finding $\arg \max_{w\in e} \ell(w)$ and applying corresponding \texttt{set-owner} procedures. Then we move $v$ to level $\ell$, and, if $\ell > \ell^{old}(v)$, we make $v$ owner of all hyperedges in $\mathcal{A}(v, \ell')$ for $\ell^{old}(v) \leq \ell' < \ell^{new}(v)$, by invoking \texttt{set-owner} procedures.

\begin{claim} 
    Processing $x$ \texttt{set-level} procedures in parallel takes $O(\log{N})$ depth and $\sum_{\text{relevant }v}O((|\mathcal{O}^{old}(v)| + |\mathcal{O}^{new}(v)|)r\log{N}) + O(Lx)$ work.
\end{claim}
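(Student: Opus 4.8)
The plan is to bound separately the three sources of work identified in the single-procedure description---reowning the old edges of $v$, moving $v$, and reowning the edges $v$ gains when rising---and then to observe that the corresponding \texttt{set-owner} invocations can all be batched. Concretely, for a single \texttt{set-level($v,\ell$)} call we issue one \texttt{set-owner} call for each edge $e \in \mathcal{O}^{old}(v)$ (to hand $e$ to a new highest-level endpoint) and, if $\ell > \ell^{old}(v)$, one \texttt{set-owner} call for each edge in $\bigcup_{\ell^{old}(v)\le \ell'<\ell}\mathcal{A}(v,\ell')$. The latter set is exactly $\tilde{\mathcal{O}}_{v,\ell}\setminus\mathcal{O}(v)$, so all these edges belong to $\mathcal{O}^{new}(v)$. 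Hence the number of \texttt{set-owner} calls triggered by the $v$-th procedure is at most $|\mathcal{O}^{old}(v)| + |\mathcal{O}^{new}(v)|$, and summing over the $x$ procedures gives a total of $\sum_{v}(|\mathcal{O}^{old}(v)|+|\mathcal{O}^{new}(v)|)$ \texttt{set-owner} calls.

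Next I would invoke the previous claim on batched \texttt{set-owner}: processing $X$ such calls in parallel costs $O(\log N)$ depth and $O(Xr\log N)$ work. Plugging in $X = \sum_{v}(|\mathcal{O}^{old}(v)|+|\mathcal{O}^{new}(v)|)$ yields the claimed $\sum_{\text{relevant }v} O\big((|\mathcal{O}^{old}(v)|+|\mathcal{O}^{new}(v)|)r\log N\big)$ work bound and $O(\log N)$ depth for that part. The remaining work is the bookkeeping that is \emph{not} a \texttt{set-owner} call: for each procedure we must (i) compute $\arg\max_{w\in e}\ell(w)$ for each $e\in\mathcal{O}^{old}(v)$ to decide the new owner---this is $O(r)$ work per such edge and $O(\log r)$ depth, already absorbed into the \texttt{set-owner} accounting---and (ii) physically write the new value $\ell(v)=\ell$ and, crucially, recompute the membership of $v$ in the sets $S_{\ell'}$ for all $L+1$ levels $\ell'$, since $v$'s level changed. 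Item (ii) contributes $O(L)$ work per procedure, hence the additive $O(Lx)$ term, in $O(\log N)$ depth (using the dictionary's batch insert/erase on the $S_{\ell'}$'s). Combining the two parts and noting the depth is the max, not the sum, gives the statement.

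The one point that needs care---and that I expect to be the main obstacle in making this airtight---is parallel correctness when many \texttt{set-level} procedures touch overlapping data: two procedures for vertices $v,v'$ may both be reassigning ownership of edges in a shared neighborhood, or both trying to insert/remove from the same $S_{\ell'}$, so the ``$v$ moves while everyone else stays put'' assumption underlying $\tilde{o}_{v,\ell}$ is violated. For the work/depth bound this is benign---each edge is still touched $O(1)$ times per procedure that owns it before or after, and dictionary operations are designed for batched concurrent updates---but one must state explicitly that the claim is purely a resource bound: it counts operations assuming the batch of \texttt{set-owner} and $S_{\ell'}$-updates is well-defined, and defers the semantic question of \emph{which} owner each edge ends up with (and whether the invariants hold afterward) to the higher-level algorithm that schedules these calls. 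So in the write-up I would add a sentence clarifying that the level values $\ell^{new}$ are treated as given inputs, the new owners are determined by these fixed values, and then the counting argument above goes through verbatim. Everything else is the routine application of the parallel-prefix-sum bound (for refreshing the $\tilde{o}_{v,\ell'}$) and the dictionary bounds already cited.
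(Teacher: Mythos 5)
Your proof takes essentially the same route as the paper: count the $\texttt{set-owner}$ invocations as $\sum_v(|\mathcal{O}^{old}(v)|+|\mathcal{O}^{new}(v)|)$, invoke the batched $\texttt{set-owner}$ claim for the dominant term, and account separately for the per-procedure, per-level overhead to get the additive $O(Lx)$. The one place you diverge is in what you attribute the $O(Lx)$ term to: the paper attributes it to the need to \emph{inspect} each $\mathcal{A}(v,\ell')$ for $\ell^{old}(v)\le\ell'<\ell^{new}(v)$ (an $O(1)$ check per level just to see whether the set is nonempty, since the elements themselves are already charged to the $\texttt{set-owner}$ calls), whereas you attribute it to recomputing $v$'s membership in the sets $S_{\ell'}$ across all $L+1$ levels. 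Your attribution is a plausible cost source but, as you yourself invoke the dictionary's batch insert/erase for those $S_{\ell'}$ updates, it would naively carry an extra $\log N$ factor (i.e., $O(Lx\log N)$, not $O(Lx)$); the paper's attribution---a constant-time emptiness check per level---matches the stated bound exactly without the extra factor. This is a small bookkeeping mismatch rather than a conceptual gap, but worth fixing if you want the exponents to line up. Your closing caveat about concurrent $\texttt{set-level}$ calls touching overlapping data is a reasonable clarification that the claim is a pure resource bound with fixed target levels as input; the paper leaves this implicit.
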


\begin{proof}
    Determining ownership of each edge that's present in any of $\mathcal{O}^{old}(v)$ takes $O(1)$ depth and work equal to the sum of $O(|\mathcal{O}^{old}(v)|r)$ over all relevant $v$. Then we would have to perform \texttt{set-owner} procedures for those edges, as well as for edges in $\mathcal{A}(v, \ell')$ for $\ell^{old}(v) \leq \ell' < \ell^{new}(v)$ for all relevant $v$. That is $\sum_{\text{relevant }v} |\mathcal{O}^{old}(v)| + |\mathcal{O}^{new}(v)|$ \texttt{set-owner} procedures. 

    This would take $O(\log{N})$ depth and $\sum_{\text{relevant }v}O((|\mathcal{O}^{old}(v)| + |\mathcal{O}^{new}(v)|)r\log{N}) + O(Lx)$ work. (Here $Lx$ comes from the need to inspect each $\mathcal{A}(v, \ell)$ for each $v$).

\end{proof}




\subsection{The Update Algorithm}

Whenever we receive a batch of new updates, we categorize them into three groups:

\begin{enumerate}
    \item Deleting unmatched hyperedges (or temporarily deleted hyperedges)
    \item Deleting matched hyperedges
    \item Inserting hyperedges
\end{enumerate}

We deal with these three groups one by one.

\subsubsection{Deleting unmatched hyperedges}

This is the easiest case. We can simply delete them, and update the corresponding data structures. (When we delete a temporarily deleted hyperedge, all we have to do is to delete it from the corresponding set $\mathcal{D}(e)$). 

\subsubsection{Deleting matched hyperedges}

After some matched hyperedges are deleted, we have to ensure the maximality of our matching. Let us call nodes of deleted matched hyperedges \textit{undecided}. The set of undecided nodes might change throughout our handling of this update.

We will go over each level from $L$ to $0$, one by one. In the end, we will make sure that the current matching is maximal.

\begin{center} \fbox{\begin{minipage}{0.97\textwidth}
	\textbf{Procedure \texttt{process-level($\ell$)}.} Processes level $\ell$, guaranteeing the following invariants: 

\begin{invariant}\label{inv:level} After \texttt{process-level($\ell$)}:
    \begin{itemize}
    \item There are no \textit{undecided} nodes in level $\ell$ or above (that is, all nodes in levels $\ell$ or above are matched);
    \item There is no node $v$ with $\ell(v)<\ell$ such that $\tilde{o}(v, \ell) \geq \alpha^{\ell}$.
\end{itemize}
\end{invariant}

\end{minipage} } \end{center}

Procedure \texttt{process-level($\ell$)} consists of two steps, described below.

\paragraph{Step 1: ensuring the first part of the invariant.} Let $U$ be the set of all undecided nodes $v$ on this level. Let $U_{free}$ denote the set of hyperedges such that all their nodes are currently unmatched, and each of them is owned by some node from $U$. 

Now we will run the maximal matching algorithm from \cref{theorem:MaximalMatching} on set $U_{free}$. For each newly matched hyperedge, we will place it and all its nodes onto level $0$. For each unmatched undecided node from level $\ell$, we will place it onto level $-1$.

\paragraph{Step 2: ensuring the second part of the invariant.} We need to make sure that there are no nodes $v$ with $\ell(v) < \ell$ and $\tilde{o}_{v, \ell} \geq \alpha^{\ell}$. Informally, while there exists any such node $v$, we would want to raise it to level $\ell$. Before going into details of this step, we will describe how we would perform it in a sequential setting, mirroring the procedure from \cite{assadi2021fully}. 

\paragraph{Performing Step 2 in sequential setting}. We will handle such nodes one by one, while they exist, with the following procedure:

\begin{center} \fbox{\begin{minipage}{0.97\textwidth}
	\textbf{Procedure \texttt{random-settle($v, \ell$)}.}  Handles a given vertex $v$ with $\ell(v) < \ell$ and $\tilde{o}_{v, \ell} \geq \alpha^{\ell}$.

\end{minipage} } \end{center}

First, we perform \texttt{set-level($v, \ell)$}. Now, $|\mathcal{O}(v)| \geq \alpha^{\ell}$. Then, we choose a random hyperedge $e$ from $\mathcal{O}(v)$, set the level of each node $u \in e$ to level $\ell$, and add $e$ to the matching. If $u$ was already matched by some hyperedge $e'$, we will delete hyperedge $e'$ (and reinsert it back later). Note that all other nodes of $e'$ become \textit{undecided}, but their levels are smaller than $\ell$, so we will deal with them when processing the lower levels.

Finally, we add all other edges from $\mathcal{O}(v)$ to $\mathcal{D}(e)$, and (temporarily) delete them from the graph. 

\paragraph{Performing Step 2 in parallel setting}. Unfortunately, in the parallel setting, we cannot afford to deal with such nodes $v$ with $\tilde{o}_{v, \ell} \geq \alpha^{\ell}$ one by one. Somehow, we want to raise many nodes at a time to level $\ell$. 

Let us consider the set $B$ of all nodes $v$ with $\ell(v) < \ell$ and $\tilde{o}_{v, \ell} \geq \alpha^{\ell}$ (at the beginning it is equal to $S_{\ell}$, but it's going to be modified). For this set $B$, we are going to perform the process \texttt{grand-random-settle($B, \ell$)}. 

\begin{center} \fbox{\begin{minipage}{0.97\textwidth}
	\textbf{Procedure \texttt{grand-random-settle($B, \ell$)}.}  Handles given set of vertices $B$, so that in the end, for each vertex $v \in B$, either $\ell(v) = \ell$, or $\tilde{o}_{v, \ell}<\frac{\alpha^{\ell}}{2}$.
\end{minipage} } \end{center}

At the beginning of the procedure, consider the set $E' = \cup_{v \in B} \tilde{\mathcal{O}}_{v, \ell}$. For each edge $e\in E'$, choose uniformly at random a node $h(e)\in e$.  
Next, we are going to perform the sub-procedure, which we will call \texttt{grand-random-subsettle($B, \ell$)}, until the postcondition of the procedure \texttt{grand-random-settle} becomes satisfied. 

\paragraph{\texttt{grand-random-subsettle($B, \ell$)}.} This procedure consists of $2\log{\alpha}$ phases. The $i$-th phase consists of $O(\log{|E'|})$ iterations of the subsubprocedure, called \texttt{grand-random-subsubsettle($B, \ell, i$)}.

\paragraph{\texttt{grand-random-subsubsettle($B, \ell, i$)}.} 

\begin{enumerate}
    \item Let $p = \frac{2^i}{\alpha^{\ell+2}}$. For each edge in $E'$, mark it with probability $p$.
    \item Consider those marked edges in $E'$, which don't have incident marked edges in $E'$. Now we are going to lift all of them to level $\ell$, and add them to $\mathcal{M}$. Here is how we do it. 
    
    Consider any marked edge $e$ without incident marked edges. For each $u \in e$, perform \texttt{set-level$(u, \ell)$}, and then add $e$ to $\mathcal{M}$. Note that $u$ could have already been matched by some edge $\mathcal{M}^{old}(u)\neq \perp$. In that case, we will treat the hyperedge $\mathcal{M}^{old}(u)$ as deleted from the matching; all nodes of those hyperedges become undecided. At the end of the update step, we will reinsert $\mathcal{M}^{old}(u)$ and all edges from $\mathcal{D}(\mathcal{M}^{old}(u))$ back to the graph.

    In addition, for each non-marked edge $e'$, for which $f(e') \in e$, add $e'$ to $\mathcal{D}(e)$, and (temporarily) delete $e'$ from the graph. 

    \item Update set $B$ accordingly: keep only nodes $v$ for which we still have $\ell(v) < \ell$ and $\tilde{o}_{v, \ell} \geq \frac{\alpha^{\ell}}{2}$. Update set $E'$ accordingly: set $E' = \cup_{v \in B} \tilde{\mathcal{O}}_{v, \ell}$.
\end{enumerate}

\subsubsection{Inserting hyperedges}

When some hyperedges are inserted, first, for each of them, we check if all its nodes are unmatched. Let $S_{free}$ denote the set of hyperedges that have all of their nodes unmatched. Then, we run the maximal matching algorithm from \cref{theorem:MaximalMatching} on set $S_{free}$. Next, for each newly matched hyperedge, we will place it and all its nodes onto level $0$. 

Finally, for each added hyperedge $e$, we will find $v = \arg \max_{u \in e} \ell(u)$, and set $v$ as owner of $e$.

\paragraph{Reinserting temporarily deleted hyperedges.} Throughout our processing, we might have deleted some matched hyperedges: some - forced by updates, some - forced by our update algorithm. For each such hyperedge $e$, we are going to insert all hyperedges from $\mathcal{D}(e)$ back to the graph, together with all other insertion updates.

This finishes the description of the algorithm.


\section{Analysis of the algorithm}

Our main result is presented below:

\begin{theorem}\label{thm:main} 
Our parallel dynamic algorithm for hypergraph maximal matching has the following guarantees:
    \begin{enumerate}
        \item The depth of processing any batch of updates is $O(L \cdot \log{a} \cdot \log^3{N})$, with high probability. 
        \item The work spent on processing $t$ updates is $O(t\alpha^8L^2\log^2{\alpha}\log^7{N})$, with high probability.
    \end{enumerate}
\end{theorem}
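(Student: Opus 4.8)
The plan is to bound depth and (amortized) work separately, and for the work bound to set up a potential/charging argument. For the \textbf{depth}: a batch update is processed by going through $L+1$ levels from $L$ down to $0$, invoking \texttt{process-level($\ell$)} once per level. Each call has two steps. Step 1 runs the static parallel maximal matching algorithm of \Cref{theorem:MaximalMatching} on $U_{free}$, which takes $O(\log M)=O(\log N)$ depth, plus a constant number of \texttt{set-level} calls batched in parallel, each costing $O(\log N)$ depth by the claims on \texttt{set-owner}/\texttt{set-level}. Step 2 runs \texttt{grand-random-settle}, which repeats \texttt{grand-random-subsettle} until its postcondition holds; \texttt{grand-random-subsettle} has $2\log\alpha$ phases, each with $O(\log|E'|)=O(\log N)$ iterations of \texttt{grand-random-subsubsettle}, and each iteration is a constant number of marking/parallel-\texttt{set-level}/set-update operations of depth $O(\log N)$. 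So one \texttt{grand-random-subsettle} has depth $O(\log\alpha\cdot\log^2 N)$. The key sub-claim I need is that \texttt{grand-random-subsettle} needs to be repeated only $O(\log N)$ times (this is where a Luby-style halving argument enters: after one pass, for every $v\in B$ either $\ell(v)=\ell$ or $\tilde o_{v,\ell}$ has dropped, with high probability, by a constant factor, so after $O(\log N)$ passes $B$ is exhausted); this gives \texttt{process-level} depth $O(\log a\cdot\log^3 N)$ and total depth $O(L\log a\cdot\log^3 N)$ over all levels. I would prove this halving claim by analyzing one \texttt{grand-random-subsubsettle} iteration at the ``right'' marking probability $p=2^i/\alpha^{\ell+2}$: the $2\log\alpha$ phases sweep $p$ geometrically so that some phase hits the regime where a constant fraction of the edges counted toward $\tilde o_{v,\ell}$ get marked-and-settled or absorbed into some $\mathcal D(e)$, and a balls-in-bins / Luby-type concentration gives the constant-factor drop with high probability.

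For the \textbf{work}: the total work in a batch decomposes into (i) work on deleting unmatched/temporarily-deleted edges, $O(r\log N)$ per such deletion, charged directly to that update; (ii) work inside \texttt{process-level} Step 1, i.e. the static matching on $U_{free}$ plus the \texttt{set-level} calls, which by \Cref{theorem:MaximalMatching} and the \texttt{set-level} claim is $O(r\log N)$ per edge of $U_{free}$ plus $O(L)$ per touched vertex; (iii) work inside \texttt{grand-random-settle}: each \texttt{grand-random-subsubsettle} iteration does $O(r\log N)$ work per edge currently in $E'$, plus $O(Lx)$ overhead for the $x$ nodes whose levels change, plus the cost of rebuilding $B$ and $E'$; (iv) work on insertions and on reinserting the $\mathcal D(e)$ sets. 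The point is that every expensive action — settling a node $v$ at level $\ell$, which costs roughly $\tilde\Theta(\alpha^\ell r\log N)$ because it touches $|\mathcal O(v)|\ge\alpha^\ell$ owned edges and possibly evicts a matched edge and cascades — must be paid for by a budget of $\Theta(\alpha^\ell)$ cheap deletions that occurred earlier in the edges of $\mathcal D(e)$. So I would define a potential $\Phi$ that credits each temporarily deleted edge in some $\mathcal D(e)$ with a charge proportional to $\alpha^{\ell(e)}\cdot\poly(r\log N)$, show that every ``easy'' deletion releases enough potential, and show that every invocation of the settle machinery either is paid from the adversary's update budget or is paid by the potential released when the $\mathcal D(e)$ of a deleted matched edge empties out. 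Here the oblivious-adversary sampling argument is essential: because $e$ was chosen essentially uniformly (up to $\poly(r)$ factors, per the algorithm overview) from a set of size $\Theta(\tilde o/\alpha^{\ell})\cdot\alpha^\ell$, in expectation — and, with a martingale/Chernoff argument over the $\poly(N)$-long update sequence, with high probability — a constant fraction of $\mathcal D(e)$ is deleted before $e$ is, so the needed budget has indeed been accumulated. Summing the per-edge charges over all $M=\poly(N)$ inserted edges, with the $\alpha^8L^2\log^2\alpha\log^7 N$ factor coming from: $\alpha^{O(1)}$ for the sampling slack, $L$ for the levels, $L$ again for the \texttt{set-level} $O(L)$-per-vertex overhead, $\log^2\alpha$ for the phases squared, and $\log^{O(1)}N$ for the dictionary/\texttt{set-owner} overheads and the $O(\log N)$ repetitions of \texttt{grand-random-subsettle}, yields the stated $O(t\alpha^8L^2\log^2\alpha\log^7 N)$.

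The order I would carry this out: first establish \Cref{inv:level} is actually maintained by \texttt{process-level} (a correctness prerequisite — in particular that processing levels $L,\dots,0$ in this order never re-creates an undecided node at a level already processed, which needs the observation that settling at level $\ell$ only creates undecided nodes at levels $<\ell$); then the depth bound, whose crux is the $O(\log N)$-repetitions claim for \texttt{grand-random-subsettle} and the per-iteration halving inside it; then set up the potential function and verify the ``easy deletion releases potential'' and ``expensive settle is covered'' inequalities; finally the probabilistic step that the $\mathcal D(e)$ budgets materialize with high probability against the oblivious adversary, and sum.

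The step I expect to be the main obstacle is the \textbf{with-high-probability budget argument for expensive deletions in the parallel setting}. In the sequential algorithm the sets from which matching edges are sampled are disjoint, so each edge contributes to one budget and a clean linearity/Chernoff argument works; here the sets $\tilde{\mathcal O}_{v,\ell}$ heavily overlap, and many nodes rise simultaneously in one \texttt{grand-random-subsubsettle} iteration, so I must argue that the random choices $h(e)$ together with the independent markings still guarantee that (a) each temporarily deleted edge is claimed by exactly one $\mathcal D(e)$, (b) the edge $e$ that ``wins'' a vertex was sampled nearly uniformly from a set of size $\Omega(\alpha^\ell)$ up to $\poly(r)$ factors, and (c) these guarantees hold jointly, with high probability, across all $\poly(N)$ updates — which forces a union bound over updates and a martingale concentration that is robust to the adversary adaptively choosing the update order but not the randomness. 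Getting the dependencies right here (what is independent of what, and over which randomness the concentration is taken) is the delicate heart of the proof.
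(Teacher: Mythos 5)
Your overall decomposition (depth via level-by-level processing plus a Luby-style analysis, work via amortization against the $\mathcal D(e)$ budgets with a martingale concentration to handle the oblivious adversary) matches the paper's at a high level, and you correctly identify the central difficulty: concentration of the budget argument when the sets $\tilde{\mathcal O}_{v,\ell}$ overlap and many nodes rise simultaneously. But one piece of the depth plan as written would not go through, and the work side is missing two ingredients that the paper cannot do without.

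On the depth side, your key sub-claim --- that in a single pass of \texttt{grand-random-subsettle}, every $v\in B$ has $\tilde o_{v,\ell}$ drop by a constant factor w.h.p., so $O(\log N)$ passes suffice --- is placed at the wrong level and is not what is (or can easily be) proven. The halving actually happens \emph{inside} one \texttt{grand-random-subsettle}: the $2\log\alpha$ phases sweep the marking probability $p=2^i/\alpha^{\ell+2}$ so that, phase by phase, the maximum degree of edges in $E'$ drops by a factor of $2$, and after the last phase no edge in $E'$ can have large enough degree for any $v$ to remain in $B$. The paper's lemma establishes that after a \emph{single} \texttt{grand-random-subsettle}, $B$ is empty with probability at least $1/2$; $O(\log N)$ independent repetitions then suffice w.h.p., a much simpler mechanism than a per-pass geometric drop. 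Trying to prove a per-node geometric drop of $\tilde o_{v,\ell}$ across passes is both harder and not the right invariant, since $\tilde o_{v,\ell}$ can stay flat for an unlucky $v$ while the edge degrees in $E'$ shrink.

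On the work side, your potential-function formulation is an alternative to the paper's charging-to-epochs scheme and in principle could be made to work, but as stated it omits two things the paper's proof hinges on. First, you need a \emph{deterministic} lower bound of $|B|/\alpha^3$ on the number of newly matched level-$\ell$ edges per \texttt{grand-random-settle($B,\ell$)} call, so the $O(|B|\alpha^{\ell+1}L\log\alpha\log^3 N)$ cost of the call can be distributed at $O(\alpha^{\ell+4}L\log\alpha\log^3 N)$ per new level-$\ell$ epoch; without this counting argument, a call that raises few nodes to level $\ell$ has no budget to pay for the $|E'|$-proportional work it did. Second, you need the induced/natural epoch bookkeeping: settling at level $\ell$ can evict matched edges, creating induced epochs at lower levels, and an entire level can be dominated by induced epochs. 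The paper handles this with a one-to-one mapping from natural to induced epochs at induced levels, transferring charges upward and solving the recurrence $\hat C_\ell \le 2\bigl(O(\alpha^{\ell+4}L\log\alpha\log^3 N)+r\hat C_{\ell-1}\bigr)$. Your ``evicts a matched edge and cascades'' remark gestures at this, but gives no mechanism to prevent the cascade from blowing up the bound. Finally, your instinct toward a martingale concentration robust to the adaptive revealing of which edges are marked is exactly right; the paper's adversarial marking-game lemma is precisely that statement, and it is combined with a sampling-concentration step (bounding how many incident edges of $e$ chose $h(\cdot)=e$) to pass from ``$e$ is not $\mu$-almost-short'' to ``a fraction $\mu/\poly(r\log N)$ of $\mathcal D(e)$ is deleted before $e$.'' Those two lemmas together are the precise instantiation of the step you flag as the main obstacle, and you are right that this is where the delicate independence bookkeeping lives.
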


To help readability, we prove this theorem in two parts: We prove the guarantee on the depth of our algorithm in \cref{theorem:DepthBound}, and the guarantee on the work of our algorithm in \cref{theorem:WorkBound}.

\subsection{Depth}

\begin{lemma}
\label[lemma]{lemma:SettleLikely}
    After \texttt{grand-random-subsettle($B, \ell$)} $B$ becomes empty with probability at least $\frac{1}{2}$.
\end{lemma}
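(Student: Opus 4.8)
The plan is to show that \texttt{grand-random-subsettle} almost surely makes $B$ empty by tracking, for each node $v\in B$, the quantity $\tilde o_{v,\ell}$ and arguing that by the time we reach a phase $i$ with $2^i$ comparable to $\alpha^{\ell+2}/\tilde o_{v,\ell}$, node $v$ is settled (either $\ell(v)=\ell$ or $\tilde o_{v,\ell}$ has dropped below $\alpha^\ell/2$). The $2\log\alpha$ phases, with $p$ doubling each phase from $\approx\alpha^{-\ell-2}$ up to $\approx \alpha^{-\ell}$, exactly sweep through the relevant range of marking probabilities, so every node that still has $\tilde o_{v,\ell}\ge\alpha^\ell/2$ is "caught" in the appropriate phase.

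First I would fix a node $v$ that is still in $B$ at the start of phase $i$, so $\tilde o_{v,\ell}\ge\alpha^\ell/2$, and consider the edges in $\tilde{\mathcal{O}}_{v,\ell}$. In one iteration of \texttt{grand-random-subsubsettle($B,\ell,i$)} each such edge is marked independently with probability $p=2^i/\alpha^{\ell+2}$. I would first argue that with constant probability at least one edge $e$ incident to $v$ gets marked \emph{and has no incident marked edge}: the expected number of marked edges incident to $v$ is $p\cdot\tilde o_{v,\ell}$, and for the right choice of phase $i$ this is $\Theta(1)$ (this is where the $\Theta(1)$ vs.\ too-large distinction matters — if $p\tilde o_{v,\ell}$ were huge, collisions would kill everything; the phased doubling guarantees that for \emph{some} phase it lands in the sweet spot $\Theta(1)$, while earlier phases only help). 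Conditioned on a given edge $e\ni v$ being marked, the probability that none of its neighbors is marked is at least $(1-p)^{(\text{degree bound})}$; here one needs the degree of $e$ in the conflict graph restricted to $E'$ to be $O(\alpha^{\ell+2}/2^i)=O(1/p)$ in the relevant phase, which follows because $e\in\tilde{\mathcal{O}}_{v,\ell}$ and, before this level is processed, no node has $\tilde o\ge\alpha^{\ell}$ for levels that were already handled — so incidences are controlled. I would then conclude that in each iteration of the relevant phase, $v$ gets settled (its incident edge joins $\mathcal M$, forcing $\ell(v)=\ell$, or enough of its edges are pulled into some $\mathcal D(e)$ to drop $\tilde o_{v,\ell}$ below $\alpha^\ell/2$) with probability at least some constant $c>0$, independently of the past. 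Running $O(\log|E'|)$ iterations makes the failure probability for $v$ at most $|E'|^{-c'}$, hence a union bound over all (at most $|E'|$) nodes of $B$ shows that after the correct phase completes, no unsettled node remains, except with probability $\le 1/\poly(|E'|)\le 1/2$.

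The main obstacle I anticipate is controlling the "no incident marked edge" event uniformly: the conflict graph on $E'$ can be dense and its structure changes from iteration to iteration as $B$ and $E'$ shrink, so I cannot treat edge degrees as fixed. The fix is to use exactly the two defining properties of the phase schedule — (a) $B$ only ever shrinks within a call to \texttt{grand-random-subsettle}, so $\tilde o_{v,\ell}$ for a surviving $v$ stays $\ge\alpha^\ell/2$, and (b) an edge $e$ owned (in the $\tilde{\mathcal O}$ sense) by $v$ has all its endpoints at level $<\ell$, and the guarantee that lower-level processing has not yet created large $\tilde o$ at level $\ell$ bounds how many edges of $E'$ can be incident to a single edge $e$ by $O(r\cdot\alpha^{\ell})$; combined with $p=\Theta(\alpha^{-\ell})$ in the critical phase this makes $(1-p)^{O(r\alpha^\ell)}$ a constant. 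I would isolate this degree bound as a small lemma before the main argument. Everything else — the per-iteration constant success probability, the $O(\log|E'|)$ amplification, and the final union bound — is routine once that bound is in hand, and I expect the whole argument to hinge on correctly quantifying which phase $i$ is the "critical" one for a node of a given $\tilde o_{v,\ell}$.
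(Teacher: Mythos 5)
Your plan is related to the paper's proof but inverted in focus, and that inversion is where the gap lies. The paper does not track $\tilde o_{v,\ell}$ per node and look for a ``critical phase'' for each $v$; it tracks the \emph{degree of each edge} $e\in E'$ (the number of other $E'$-edges incident to it) and proves, by induction on the phase $i$, the invariant that after phase $i$ no edge of degree $\geq \alpha^{\ell+2}/2^i$ remains in $E'$ (w.h.p.). The base case $d(e)<\alpha^{\ell+2}$ follows from $\sum_{u\in e}\tilde o_{u,\ell+1}\leq r\alpha^{\ell+1}<\alpha^{\ell+2}$. In phase $i$, any surviving edge with $d(e)\in[\alpha^{\ell+2}/2^{i+1},\alpha^{\ell+2}/2^i]$ has a constant ($\geq 1/(2e^2)$) chance per iteration of having exactly one marked neighbor which is itself unmarked among its neighbors; $O(\log|E'|)$ iterations then remove it or drop its degree, and a union bound over edges and phases closes the argument. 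Once all degrees drop below roughly $\alpha^\ell/2$, no $v$ can have $\tilde o_{v,\ell}\geq\alpha^\ell/2$, since every edge of $\tilde{\mathcal O}_{v,\ell}$ has degree at least $\tilde o_{v,\ell}-1$; hence $B$ is empty.

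The gap in your proposal is precisely at the ``small lemma'' you flag. You assert that in phase $i$ the degree of an edge $e\in E'$ is $O(\alpha^{\ell+2}/2^i)=O(1/p)$, but the justification you sketch---that $e\in\tilde{\mathcal O}_{v,\ell}$ and no node at level $<\ell$ has $\tilde o_{u,\ell+1}\geq\alpha^{\ell+1}$---only yields the \emph{static} bound $d(e)\leq r\alpha^{\ell+1}=O(\alpha^{\ell+2})$, independent of $i$. (As a side note, your stated bound $O(r\alpha^\ell)$ is already a factor $\alpha$ too tight even statically.) You never argue that earlier phases clear out high-degree edges, which is exactly the inductive invariant that makes the phase schedule work. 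Without it, in a later phase with $p=2^i/\alpha^{\ell+2}$, the event ``marked edge incident to $v$ has no marked neighbor'' has probability as small as $(1-p)^{\Theta(\alpha^{\ell+2})}=e^{-\Theta(2^i)}$, not a constant, and your per-iteration success probability collapses. Relatedly, tying the ``critical phase'' of a node to $\tilde o_{v,\ell}$ is awkward because a node with large $\tilde o_{v,\ell}$ may still have incident edges of much larger degree (via high-$\tilde o$ co-endpoints), so even the ``right'' phase for $v$ in your sense does not guarantee $p\cdot d(e)=\Theta(1)$ for the edge that gets marked. Switching the induction to edge degrees, as the paper does, is what makes the argument close.
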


\begin{proof}
    Consider the current edge set $E' = \cup_{v \in B} \tilde{\mathcal{O}}_{v, \ell}$. For $e\in E'$, define degree $d(e)$ of an edge as the number of edges in $E'$ incident to $e$. We are going to show that, with high probability (in terms of $|E'|)$, after the $i$-th phase, there are no edges of degree at least $\frac{\alpha^{\ell+2}}{2^i}$ left in $E'$.

    Note that this is deterministically true before the first phase: since the degree of any node $v$ to $E'$, or $\tilde{o}_{v, \ell+1}$, does not exceed $\alpha^{\ell+1}$, the degree of any edge in $E'$ cannot exceed $\alpha^{\ell+1} \cdot r < \alpha^{\ell+2}$.

    Now, suppose that before we begin phase $i$, the degrees of all edges don't exceed $\frac{\alpha^{\ell+2}}{2^i}$. Consider any edge $e$ with $d(e) \in [\frac{\alpha^{\ell+2}}{2^{i+1}}, \frac{\alpha^{\ell+2}}{2^i}]$. We will show that during every single iteration of the procedure \texttt{grand-random-subsubsettle($B, \ell, i$)}, there is at least a $\frac{1}{2e^2}$ chance that it gets removed. This would imply that, with high probability (in terms of $|E'|$), an edge $e$ with $d(e) \in [\frac{\alpha^{\ell+2}}{2^{i+1}}, \frac{\alpha^{\ell+2}}{2^{i}}]$ will either get removed from $E'$ in $O(\log{|E'|})$ iterations of the procedure \texttt{grand-random-subsubsettle($B, \ell, i$)}, or its degree will drop below $\frac{\alpha^{\ell+2}}{2^{i+1}}$. We would then be able to take union bound over all $2\log{\alpha}$ phases of \texttt{grand-random-subsettle($B, \ell$)}.


    Remember that each edge gets marked with probability $p = \frac{2^i}{\alpha^{\ell+2}}$. If $e$ has $k \in [\frac{\alpha^{\ell+2}}{2^{i+1}}, \frac{\alpha^{\ell+2}}{2^i}]$ incident edges, then the probability of $e$ being not marked, and exactly one of these incident edges being marked is:
    
    \begin{align*}
        (1-p)\cdot kp(1-p)^{k-1} &= kp(1-p)^{k} 
        \geq kpe^{-pk} \\
        &\geq \frac{\alpha^{\ell+2}}{2^{i+1}} \cdot \frac{2^i}{\alpha^{\ell+2}} \cdot e^{-\frac{\alpha^{\ell+2}}{2^i} \cdot \frac{2^i}{\alpha^{\ell+2}}} \\
        &= \frac{1}{2e}
    \end{align*}

    Conditional on this event happening, consider this single incident marked edge. The probability that it has no incident marked edges is at least

    $$(1-p)^{\frac{\alpha^{\ell+2}}{2^i}} \geq e^{-p \cdot \frac{\alpha^{\ell+2}}{2^i}} = \frac{1}{e}$$

    So, the probability that $e$ has some incident marked edge $e'$ that does not have incident marked edges is at least $\frac{1}{2e^2}$. 

\end{proof}

\begin{lemma}
\label[lemma]{lemma:SettleFast}
    Procedure \texttt{grand-random-settle($B, \ell$)} has depth of $O(\log{\alpha}\log^3{N})$, with high probability.
\end{lemma}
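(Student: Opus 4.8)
The plan is to bound the depth by unwinding the three nested loops of \texttt{grand-random-settle} — the repeated calls to \texttt{grand-random-subsettle}, its $2\log\alpha$ phases, and the $O(\log|E'|)$ iterations of \texttt{grand-random-subsubsettle} inside each phase — charging $O(\log N)$ depth to the innermost step, and then handling the fact that the number of outer calls is a random variable by invoking \cref{lemma:SettleLikely} together with a geometric tail bound. Throughout we use that $|E'|\le N$: the number of hyperedges present at any moment is at most the total number of insertions, which is at most $N$, so $\log|E'| = O(\log N)$.

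First I would bound the depth of a single iteration of \texttt{grand-random-subsubsettle($B,\ell,i$)} by $O(\log N)$, with high probability. Drawing the marks has depth $O(1)$. Identifying which marked edges have no incident marked edge amounts to, for each vertex, counting the marked edges of $E'$ through it — a parallel reduction of depth $O(\log N)$ — followed by a local check at each marked edge; lifting the surviving edges to level $\ell$ is a batch of at most $O(|E'|r)$ \texttt{set-level} calls, processed in depth $O(\log N)$ by the claim on \texttt{set-level}; and the bookkeeping (inserting each non-marked edge whose random endpoint $h(e')$ lies in a lifted edge into the corresponding $\mathcal{D}(\cdot)$, temporarily deleting those edges, reinserting displaced matched edges, and recomputing $B$ and $E'=\cup_{v\in B}\tilde{\mathcal{O}}_{v,\ell}$ through the maintained sets $\mathcal{O}(v)$ and $\mathcal{A}(v,\ell')$) is a constant number of parallel-dictionary operations, each of depth $O(\log N)$.

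Next, \texttt{grand-random-subsettle($B,\ell$)} runs $2\log\alpha$ phases of $O(\log|E'|)=O(\log N)$ iterations each, hence $O(\log\alpha\log N)$ invocations of \texttt{grand-random-subsubsettle}, so one call to \texttt{grand-random-subsettle} has depth $O(\log\alpha\log^2 N)$, with high probability (union bound over its $O(\log\alpha\log N)$ inner iterations, each good with probability $1-1/\poly(N)$). By \cref{lemma:SettleLikely}, for any input set, one call leaves $B$ empty — i.e.\ makes the postcondition of \texttt{grand-random-settle} hold, terminating the loop — with probability at least $1/2$, using the fresh marks drawn inside that call; this bound holds conditioned on whatever state of $B$ has been reached so far, since the lemma is stated for an arbitrary $B$. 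Thus the number of calls is stochastically dominated by a geometric random variable with parameter $1/2$, and after $c\log N$ calls the loop has terminated except with probability $2^{-c\log N}=1/\poly(N)$. Multiplying the $O(\log N)$ calls by the $O(\log\alpha\log^2 N)$ depth of each and taking a final union bound over all (at most $\poly(N)$) randomized primitives used gives total depth $O(\log\alpha\log^3 N)$ with high probability.

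The only genuinely delicate point — and thus the place to be careful — is this last step: because the number of outer iterations is random we cannot simply multiply worst-case depths, so we must combine the per-call success guarantee of \cref{lemma:SettleLikely} with a geometric tail bound, and we must use that the $1/2$ bound is valid \emph{conditionally} on the arbitrary intermediate state of $B$ (which is exactly why it matters that \cref{lemma:SettleLikely} is phrased for a general input). Everything else is a routine accounting of parallel-dictionary, \texttt{set-owner}, and \texttt{set-level} costs against the claims already established.
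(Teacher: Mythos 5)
Your proof is correct and follows essentially the same decomposition as the paper's: bound the depth of a single \texttt{grand-random-subsubsettle} by $O(\log N)$, multiply by the $O(\log\alpha)\cdot O(\log N)$ inner iterations of one \texttt{grand-random-subsettle}, and then use \cref{lemma:SettleLikely} to argue the number of \texttt{grand-random-subsettle} calls is $O(\log N)$ with high probability. Your version is somewhat more explicit than the paper's about the $O(\log N)$ cost of the innermost step and about why the per-call $1/2$ success bound applies conditionally, but the argument is the same.
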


\begin{proof}
    \texttt{grand-random-settle($B, \ell$)} consists of performing \texttt{grand-random-subsettle($B, \ell$)} until $B$ becomes empty. \texttt{grand-random-subsettle($B, \ell$)} consists of $O(\log{\alpha})$ phases of $O(\log{|E'|})$ iterations of \texttt{grand-random-subsubsettle($B, \ell, i$)}. Note that    
    \texttt{grand-random-subsubsettle($B, \ell, i)$)} has depth $O(\log{N})$. So, if we repeat \texttt{grand-random-subsettle($B, \ell$)} $k$ times, the depth is $O(k\log{\alpha}\log{|E'|}\log{N}) = O(k\log{\alpha}\log^2{N})$. By \cref{lemma:SettleLikely}, the probability that $B$ becomes empty after performing \texttt{grand-random-subsettle($B, \ell$)} is at least $\frac{1}{2}$, so we indeed have $k = O(\log{N})$ with high probability.

\end{proof}

\begin{theorem}
\label[theorem]{theorem:DepthBound}
    The computational depth of the algorithm for processing any batch of updates is $O(L \cdot \log{\alpha}\cdot \log^3{N})$, with high probability.
\end{theorem}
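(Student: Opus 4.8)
The plan is to assemble the depth bound from the depths of the three update phases, and then observe that the deepest of these is dominated by the cost of \texttt{process-level}, which is invoked $L+1$ times. First I would recall that the algorithm processes a batch by (i) deleting unmatched/temporarily-deleted hyperedges, (ii) deleting matched hyperedges via a sweep over levels $\ell=L,\dots,0$, each handled by \texttt{process-level($\ell$)}, and (iii) inserting hyperedges (including reinserting temporarily deleted ones). Phases (i) and (iii) only involve a constant number of parallel dictionary operations plus one run of the static maximal matching algorithm of \Cref{theorem:MaximalMatching}, so each contributes $O(\log N)$ depth (the dictionary operations) plus $O(\log M)=O(\log N)$ depth (Luby), which is negligible compared to phase (ii).

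The core of the argument is bounding the depth of \texttt{process-level($\ell$)}. Step 1 runs the static maximal matching algorithm on $U_{free}$ and then performs a batch of \texttt{set-level} calls; by \Cref{theorem:MaximalMatching} and the \texttt{set-level} claim this is $O(\log N)$ depth. Step 2 is the expensive part: it runs \texttt{grand-random-settle($B,\ell$)}, and by \Cref{lemma:SettleFast} this has depth $O(\log\alpha\,\log^3 N)$ with high probability. Hence a single invocation of \texttt{process-level} costs $O(\log\alpha\,\log^3 N)$ depth w.h.p. Since the level sweep performs $L+1$ such invocations sequentially (level $\ell$ must be finished before level $\ell-1$ begins, because undecided nodes cascade downward), the total depth of phase (ii) is $O(L\cdot\log\alpha\cdot\log^3 N)$ w.h.p. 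Adding the $O(\log N)$ contributions from phases (i) and (iii) is absorbed, giving the claimed $O(L\cdot\log\alpha\cdot\log^3 N)$.

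The one point that needs care is the probabilistic bookkeeping: \Cref{lemma:SettleFast} already gives a w.h.p.\ (i.e.\ $1-1/\poly(N)$) bound for each call to \texttt{grand-random-settle}, and the parallel dictionary operations and Luby's algorithm also each succeed with probability $1-1/\poly(N)$. Since there are only $O(L)$ level iterations and $O(1)$ other phases — i.e.\ $\poly(\log N)$ many events, each failing with probability $1/\poly(N)$ — a union bound over all of them still leaves overall failure probability $1/\poly(N)$, so the final bound holds with high probability. I would state this union bound explicitly at the end.

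The main obstacle, such as it is, is simply making sure the sequential dependency structure is spelled out correctly — that \texttt{process-level($\ell$)} genuinely must complete before \texttt{process-level($\ell-1$)}, so the $L+1$ invocations add rather than run in parallel — and that everything else (the dictionary updates inside \texttt{set-owner}/\texttt{set-level}, the two Luby runs) is genuinely $O(\polylog N)$ depth and therefore does not change the asymptotics. There is no hard technical step here; the real content lives in \Cref{lemma:SettleLikely} and \Cref{lemma:SettleFast}, which are already established, so this theorem is essentially a clean accounting of $(L+1)\times$ the per-level depth plus lower-order terms.
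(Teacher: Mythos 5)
Your proposal is correct and follows essentially the same decomposition as the paper's proof: bound the depth of the unmatched-deletion and insertion phases by $O(\log N)$, bound a single invocation of \texttt{process-level($\ell$)} by the $O(\log\alpha\log^3 N)$ depth of \texttt{grand-random-settle} (via \Cref{lemma:SettleFast}), and multiply by the $L+1$ sequential level iterations. Your explicit mention of the union bound over the $\poly(\log N)$ high-probability events is a minor addition of rigor that the paper leaves implicit, but the argument is the same.
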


\begin{proof}
Consider each update separately. Deleting unmatched hyperedges takes $O(\log{N})$ depth. Inserting edges takes $O(\log{N})$ depth.

Deleting matched hyperedges is a bit more complicated: we process everything level by level with \texttt{process-level($\ell$)}. For each level from $L$ to $0$, first, we run the maximal matching algorithm from \cref{theorem:MaximalMatching} in $O(\log{N})$ depth, then determine the set $B$ in $O(\log{N})$ depth, and then perform \texttt{grand-random-settle($B, \ell$)}, which takes $O(\log{\alpha}\log^3{N})$ depth, with high probability. Hence, the depth for a single update is $O(L\cdot \log{\alpha} \cdot \log^3{N})$ with high probability.
\end{proof}

\subsection{Work}

\subsubsection{Epochs}

We will use the notion of epochs from \cite{solomon2016fully}, \cite{assadi2021fully}.

\begin{definition}[Epoch] For any time $t$ and any hyperedge $e$ in the matching $\mathcal{M}$ at time $t$, the epoch of $e$ and $t$, denoted by $epoch(e, t)$, is the pair $(e, \{t'\})$ where $\{t'\}$ denotes the maximal continuous time period containing $t$ during which $e$ was always present in $\mathcal{M}$ (not even deleted temporarily during one step and inserted back at the same time step). 

We further define \underline{level} of $epoch(e, t)$ as the level $\ell(e)$ of $e$ during the time period of the epoch.

\end{definition}

We are going to charge all costs of computation to the epochs, or to the insert/erase updates.

\subsubsection{Charging Computation Costs}


We will go over all the work performed by our algorithm and show how we can charge it to some epochs, or insert/erase updates.

\paragraph{Deleting non-matched edges.} Deleting $x$ non-matched edges takes $O(xrL\log{N})$ work: we just need to update the relevant data structures for $O(xr)$ nodes. We can afford to charge each deletion update $O(rL\log{N})$ work.

\paragraph{Inserting edges.} Inserting $x$ edges takes $O(xrL\log{N})$ work: we perform the maximal matching algorithm from \cref{theorem:MaximalMatching} on some subset of them in work $O(xr\log{N})$ with high probability, and then update the relevant data structures for $O(xr)$ nodes in work $O(xrL\log{N})$. 

There are two types of edge insertions: 

\begin{enumerate}
    \item Edge insertions coming from the updates. We can charge each such update $O(rL\log{N})$;
    \item When an edge $e\in M$ gets deleted, we have to reinsert all edges from $\mathcal{D}(e)$ back to the graph. Note that if $e$ got created at level $\ell$, the size of $\mathcal{D}(e)$ cannot exceed $\alpha^{\ell+1}$, by the \cref{inv:level}. Therefore, we can charge the epoch corresponding to $e$ $O(rL\log{N})$ work for the reinsertion of each such temporarily deleted edge. This adds up to $\alpha^{\ell+1}O(rL\log{N}) = O(\alpha^{\ell+2}L\log{N})$.
\end{enumerate}

\paragraph{Dealing with undecided nodes in \texttt{process-level($\ell$)}.} If $U$ is the set of undecided nodes on level $\ell$, and $U_{free}$ is the set of hyperedges 
such that all their nodes are currently unmatched, and each of them is owned by some node from $U$, we run the maximal matching algorithm from \cref{theorem:MaximalMatching}. Note that $|U_{free}| \leq |U|\cdot \alpha^{\ell+1}$, so running this algorithm takes $O(|U_{free}|r\log{N}) = O(|U|\alpha^{\ell+2}\log{N})$ work. Additionally, we then spend $O(|U_{free}|rL\log{N}) = O(|U|\cdot \alpha^{\ell+2}L\log{N})$ for updating data structures. 

Note that a node becomes undecided only as a result of its matched edge $e$ being deleted. Hence, we can charge the epoch corresponding to $e$ $O(\alpha^{\ell+2}L\log{N})$ work for each its undecided node, which adds up to $O(\alpha^{\ell+2}L\log{N}) \cdot r = O(\alpha^{\ell+3}L\log{N})$.

\paragraph{Performing \texttt{grand-random-settle($B, \ell)$}.} First, we find set $B$ in $O(|B|\log{N})$ work. Then we perform \texttt{grand-random-subsettle($B, \ell$)} until $B$ becomes empty, where \texttt{grand-random-subsettle($B, \ell$)} consists of $O(\log{\alpha})$ phases of $O(\log{|E'|}) = O(\log{N})$ iterations of \texttt{grand-random-subsubsettle($B, \ell, i$)}, which takes $O(|E'|L\log{N})$ work.

According to \cref{lemma:SettleFast}, $B$ will become empty in $O(\log{N})$ repetitions of this process, with high probability. So, the computation cost corresponding to \texttt{grand-random-settle($B, \ell)$} is in $O(|E'|L\log{\alpha}\log^3{N}) = O(|B|\alpha^{\ell+1}L\log{\alpha}\log^3{N})$, with high probability. 

During this process, we add some matched edges onto level $\ell$. We want to charge this cost of $O(|B|\alpha^{\ell+1}L\log{\alpha}\log^3{N})$ to these newly created epochs. For that, we need to show that there is enough of them.

\begin{lemma}
    During \texttt{grand-random-settle($B, \ell)$}, at least $\frac{|B|}{\alpha^3}$ hyperedges are added to $\mathcal{M}$ on level $\ell$. This is true even deterministically, not only with high probability.
\end{lemma}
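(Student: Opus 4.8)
The plan is to track the "potential" $\sum_{v \in B} \tilde{o}_{v,\ell}$ and argue that each hyperedge added to $\mathcal{M}$ at level $\ell$ can absorb at most $\poly(\alpha)$ units of this potential, so the number of added hyperedges must be at least $|B|$ divided by that polynomial factor. Concretely, recall that \texttt{grand-random-settle} only terminates when every $v \in B$ either has reached level $\ell$ or has $\tilde{o}_{v,\ell} < \alpha^\ell/2$; and at the start every $v \in B$ has $\tilde{o}_{v,\ell} \geq \alpha^\ell$. So the quantity $\sum_{v} \tilde{o}_{v,\ell}$ drops (across all the steps of the procedure) by at least $\alpha^\ell/2$ per vertex of $B$ that gets "resolved," i.e., by at least $|B| \cdot \alpha^\ell / 2$ in total. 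I would then show that each matched hyperedge $e$ we lift to level $\ell$ is responsible for only $O(\alpha^{\ell})$-ish of this decrease, and divide.

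The key steps, in order. First, fix a single call to \texttt{grand-random-subsubsettle} and an edge $e$ that is lifted to level $\ell$ in it. When $e$ is lifted, we call \texttt{set-level}$(u,\ell)$ for each of its $\leq r$ endpoints $u$. This can decrease $\tilde{o}_{w,\ell}$ for other nodes $w \in B$ in two ways: (i) an endpoint $u$ now sits at level $\ell$, which removes $u$'s incident edges from the sets $\mathcal{A}(w,\ell')$ with $\ell' < \ell$ that previously contributed to $\tilde{o}_{w,\ell}$; (ii) the edges $e' \notin \mathcal{M}$ with $f(e') \in e$ get temporarily deleted and moved to $\mathcal{D}(e)$, which removes them from the graph entirely and hence from every $\mathcal{A}(w,\ell')$. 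Second, I bound how much each of these removals can cost. For (i): each endpoint $u$ of $e$ has at most $\tilde{o}_{u,\ell+1} \le \alpha^{\ell+1}$ incident edges that are relevant here (the same bound used in \cref{lemma:SettleLikely}), and each such incident edge lies in at most $r$ of the nodes' neighborhoods, so over the $\le r$ endpoints of $e$ this accounts for a decrease of at most $r \cdot \alpha^{\ell+1} \cdot r = \alpha^{\ell+1} r^2$ in $\sum_w \tilde{o}_{w,\ell}$. For (ii): the set $\mathcal{D}(e)$ after the whole procedure has size at most $\alpha^{\ell+1}$ (this follows from \cref{inv:level}, exactly as in the "Inserting edges" paragraph), and each deleted edge again lies in at most $r$ neighborhoods, so this accounts for a decrease of at most $r \cdot \alpha^{\ell+1}$. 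Third, add these up: each lifted hyperedge $e$ accounts for a total decrease of at most $O(\alpha^{\ell+1} r^2) = O(\alpha^{\ell+1} \alpha^2) = O(\alpha^{\ell+3})$ in $\sum_w \tilde{o}_{w,\ell}$, using $r \le \alpha$. Fourth, combine: since the total decrease over the whole procedure is at least $|B| \alpha^\ell / 2$, and each lifted hyperedge contributes at most $c\,\alpha^{\ell+3}$ for an absolute constant $c$, the number of lifted hyperedges is at least $\frac{|B| \alpha^\ell / 2}{c \alpha^{\ell+3}} = \Omega(|B|/\alpha^3)$, and tightening the constants (choosing $\alpha = 4r$ generously) gives the claimed $|B|/\alpha^3$. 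Note the argument never invoked any probabilistic event — the decrease of $\tilde{o}$ and the size bounds on $\mathcal{D}(e)$ and on each node's relevant degree all hold deterministically — which is why the lemma can claim a deterministic guarantee.

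The main obstacle I anticipate is the bookkeeping in step two: being careful that the decrease in $\tilde{o}_{w,\ell}$ is genuinely attributable to distinct lifted hyperedges and is not double-counted, and nailing down precisely which edges can leave $\mathcal{A}(w,\ell')$ when a node moves up versus when edges are temporarily deleted. In particular one must be careful that $\tilde{o}_{w,\ell}$ is only ever decreased (within this procedure) by these two mechanisms — e.g. that lifting $e$ doesn't somehow increase some $\tilde{o}$ in a way that would spoil the monotone-potential argument — and that the $\alpha^{\ell+1}$ degree bound genuinely persists throughout the procedure (it does, since degrees only drop as edges are removed, matching the inductive setup of \cref{lemma:SettleLikely}). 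A secondary subtlety is making sure the "$\alpha^\ell/2$ per resolved vertex" accounting correctly handles vertices that exit $B$ by reaching level $\ell$ rather than by having their $\tilde o$ drop — but for those the contribution is folded into the count of lifted hyperedges directly (such a vertex's own matched edge, or one of the edges lifted at it, is among those counted), so if anything this only helps.
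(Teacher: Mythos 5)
Your proposal is correct and takes essentially the same potential/counting argument as the paper: lower-bound the total decrease of $\sum_{v\in B}\tilde o_{v,\ell}$ via the vertices that exit $B$ without rising, and upper-bound the per-lifted-edge contribution to that decrease via the degree bound $\tilde o_{u,\ell+1}<\alpha^{\ell+1}$ (guaranteed since level $\ell+1$ was already processed). The only cosmetic differences are that you separately charge the temporarily deleted edges (your mechanism (ii)), which is redundant since those edges already left the relevant $\tilde{\mathcal O}$ sets the moment their chosen node $f(e')$ was lifted via mechanism (i), and that the paper handles the ``vertex rises instead of dropping'' case more crisply by restricting attention to the at least $|B|-rk$ non-rising vertices up front rather than patching the accounting at the end as you do.
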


\begin{proof}
    Denote the number of these newly added edges by $k$. Then at least $|B|-rk$ nodes from $B$ have not moved to level $\ell$. For each such node $v$, we have $o^{old}_{v, \ell}\geq \alpha^{\ell}$, and $o^{new}_{v, \ell}\leq \frac{\alpha^{\ell}}{2}$. So, $\sum_{v\in B} \tilde{o}_{v, \ell}$ decreased by at least $(|B|-rk)\cdot \frac{\alpha^{\ell}}{2}$.
    
    The value of $o_{v, \ell}$ can decrease only in virtue of some node $u$ from an edge $e$ from $\tilde{\mathcal{O}}_{v, \ell}$ moving up to level $\ell$: now edge $e$ cannot be owned by a node with level $< \ell$. Note that $u$ is moving up to level $\ell$ could cause this change in at most $\alpha^{\ell+1}$ edges and each such edge can decrease the value of $o_{v, \ell}$ for at most $r$ nodes. 
    
    So, from one side, the sum of $o_{v, \ell}$ over nodes from $B$ which were not lifted to level $B$, decreased by at most $rk\alpha^{l+1}r$. From the other side, we know that it decreased by at least $(|B|-rk)\frac{\alpha^{\ell}}{2}$. So, we get: 

    $$(|B|-rk)\frac{\alpha^{\ell}}{2} \leq rk\alpha^{l+1}r \implies (|B|-rk)\leq k(2r^2\alpha) $$ from which we conclude $$ |B|\leq k(2r^2\alpha + r)<k\alpha^3 \implies k\geq \frac{|B|}{\alpha^3}$$
    
\end{proof}

We charge $O(\frac{|B|\alpha^{\ell+1}L\log{\alpha}\log^3{N}}{|B|\alpha^{-3}}) = O(\alpha^{\ell+4}L\log{a}\log^3{N})$ work to each of these newly matched hyperedges at level $\ell$.

Summing up: each deletion/insertion was charged $O(rL\log{N})$, and an epoch at level $\ell$ was charged $O(\alpha^{\ell+4}L\log{a}\log^3{N})$, with high probability.

\subsubsection{Natural and Induced Epochs}

This subsubsection almost entirely repeats the corresponding section in \cite{assadi2021fully}. First, let us assume that in the end all hyperedges get deleted. This at most doubles the number of updates, and won't affect the asymptotic of our computation.

Recall that an epoch is terminated when the corresponding hyperedge $e$ is deleted from the maximal matching $\mathcal{M}$. There are two types of hyperedge deletions from $\mathcal{M}$ in the algorithm: the
ones that are a result of the adversary deleting a hyperedge from the graph, and the ones that are the result of the update algorithm to remove a matched hyperedge in favor of another (so the original hyperedge is still part of the hypergraph). Based on this, we differentiate between epochs as follows:

\begin{itemize}
    \item \textbf{Natural epoch}: We say an $epoch(e, *)$ is natural if it ends with the adversary deleting the hyperedge $e$ from $G$;
    \item \textbf{Induced epoch}: We say $epoch(e, *)$ is induced if it ends with the update algorithm removing the hyperedge $e$ from $\mathcal{M}$ (so $e$ is still part of $G$).
\end{itemize}

We will also need the notion of \textbf{induced} and \textbf{natural} levels. We say that a level $\ell$ is an \textbf{induced level} (respectively, \textbf{natural level}) if the number of induced level-$\ell$ epochs is greater than (resp., at most) the number of natural level-$\ell$ epochs.

Next, we will charge the computation costs incurred by any induced level to the computation costs at higher levels, so that in the end, the entire cost of the algorithm will be charged to natural levels. Specifically, in any induced level $\ell$, we define a one-to-one mapping from the natural to the induced epochs. For each induced epoch, at most one natural epoch (at the same level) is mapped to it; any natural epoch that is mapped to an induced epoch is called \textit{semi-natural}. For any induced
level $\ell$, all the natural $\ell$-level epochs are semi-natural by definition. For any natural level, all the natural epochs terminated at that level remain as before; these epochs are called \textit{fully-natural}.

Now, we are going to redistribute all charges between the fully-natural epochs, as follows:

\begin{enumerate}
    \item Consider any induced epoch $(e, *)$. Edge $e$ could get deleted only during the creation of some epoch at a higher level during the \texttt{grand-random-settle} process: when epoch $e'$ gets created, then, for each $v\in e'$, if $v$ was previously matched, that edge gets deleted. Note that creating epoch $e'$ leads to the deletion of at most $r$ induced epochs. We will charge the (recursive) work charged to each such induced epoch $e$ to $e'$ instead.
    \item Consider any semi-natural epoch $(e, *)$. If it was mapped to induced epoch $(e', *)$ at the same level, then we will charge the (recursive) work charged to $(e, *)$ to $(e', *)$ instead.
\end{enumerate}

\begin{lemma}
    For any $\ell \geq 0$, the (recursive) work charged to any level-$\ell$ epoch is bounded by $O(\alpha^{\ell+4}L\log{a}\log^3{N})$, with high probability.
\end{lemma}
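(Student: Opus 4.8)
The plan is to bound the recursive work charged to a level-$\ell$ fully-natural epoch by induction on $\ell$, following the redistribution scheme just set up. The base case $\ell = 0$ is immediate: a level-$0$ epoch is charged only the direct cost $O(\alpha^4 L \log a \log^3 N)$ computed in the previous subsubsection, and no lower-level epochs can be charged to it, so there is no recursive contribution. For the inductive step, fix a level-$\ell$ epoch $(e,*)$ and tally the charges it can receive.

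First I would enumerate the sources of recursive charge to $(e,*)$. There is (i) the direct charge $O(\alpha^{\ell+4}L\log a\log^3 N)$ from the cost accounting. Then, via redistribution rule 1, when $(e,*)$ was \emph{created} during a \texttt{grand-random-settle} at its own level $\ell$, it may absorb the recursive work of up to $r$ induced epochs that were killed — but these are induced epochs at \emph{lower} levels $\ell' < \ell$, each carrying (by the inductive hypothesis, or rather its semi-natural/induced analogue) recursive work $O(\alpha^{\ell'+3}L\log a\log^3 N)$; since $\ell' \le \ell-1$ and we multiply by $r \le \alpha$, this contributes $O(r\cdot\alpha^{\ell+2}L\log a\log^3 N) = O(\alpha^{\ell+3}L\log a\log^3 N)$. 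Finally, via rule 2, if $(e,*)$ happens to be an induced epoch that has a semi-natural epoch mapped to it, it absorbs that epoch's recursive work; but the mapped epoch is at the \emph{same} level $\ell$, so one must be careful that this does not blow up. The resolution is the standard one from \cite{assadi2021fully}: the mapping is one-to-one and the recursion through rule 2 has depth one — a semi-natural epoch is fully-natural and hence is never itself on the receiving end of a rule-2 charge, so the same-level absorption happens at most once and only doubles the constant. Summing these three contributions gives $O(\alpha^{\ell+4}L\log a\log^3 N)$, completing the induction.

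The main obstacle I anticipate is making the same-level charge in rule 2 rigorous — one needs to verify that "semi-natural" and "induced" are disjoint at each level and that charges do not chain semi-natural $\to$ induced $\to$ semi-natural indefinitely, which is exactly why the induced/natural \emph{level} dichotomy was introduced: in an induced level every natural epoch is semi-natural (so they all get pushed up to the same-level induced epochs, which in turn get pushed to higher levels by rule 1), and in a natural level there are no semi-natural epochs receiving charges downward. So the cleanest formulation is a double induction or a careful ordering argument: process epochs in increasing order of level, and within a level note that rule-2 redistribution moves charge from fully-natural to induced epochs (never the reverse), after which rule-1 moves it strictly upward in level. One should also double-check the probabilistic qualifier: the only "with high probability" ingredient is the $O(\log N)$ bound on repetitions of \texttt{grand-random-subsettle} from \cref{lemma:SettleFast} feeding into the direct charge $O(\alpha^{\ell+4}L\log a\log^3 N)$; a union bound over the $\poly(N)$ epochs keeps the whole statement high-probability.

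Concretely, the write-up will read: assume the bound for all levels $< \ell$ (for both fully-natural and, by the rule-2 observation, all epochs at those levels); take a level-$\ell$ fully-natural epoch; its total recursive charge is at most $\underbrace{O(\alpha^{\ell+4}L\log a\log^3 N)}_{\text{direct}} + \underbrace{r\cdot O(\alpha^{(\ell-1)+4}L\log a\log^3 N)}_{\text{rule 1, } \le r \text{ lower epochs}} + \underbrace{O(\alpha^{\ell+4}L\log a\log^3 N)}_{\text{rule 2, at most once}} = O(\alpha^{\ell+4}L\log a\log^3 N)$, where we used $r \le \alpha$ so $r\alpha^{\ell+3} \le \alpha^{\ell+4}$. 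This closes the induction and hence proves the lemma.
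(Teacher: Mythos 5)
Your proposal follows the same redistribution scheme and accounting as the paper, and the three-way decomposition (direct charge, rule-1 absorption from up to $r$ lower-level induced epochs, rule-2 absorption from one same-level semi-natural epoch) is precisely the recurrence $\hat{C}_{\ell} \leq 2\bigl(O(\alpha^{\ell+4}L\log\alpha\log^3 N) + r\hat{C}_{\ell-1}\bigr)$ that the paper writes down and solves. One terminological slip: a semi-natural epoch is by definition \emph{not} fully-natural (the two classes are disjoint); what you mean, and what is correct, is that a semi-natural epoch is \emph{natural}, hence not induced, hence never the target of a rule-2 charge.

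There is, however, a real gap in how you close the induction. You justify the final step by ``$r \leq \alpha$ so $r\alpha^{\ell+3} \leq \alpha^{\ell+4}$,'' and then conclude that three $O(\alpha^{\ell+4}L\log\alpha\log^3 N)$ terms sum to $O(\alpha^{\ell+4}L\log\alpha\log^3 N)$. But the middle term carries the inductive constant from level $\ell-1$, not a fixed absolute constant, and if you only use $r \leq \alpha$ the recurrence becomes $c_{\ell} \leq 2c_0 + 2c_{\ell-1}$ for the hidden constants, which grows like $2^{\ell}$; summed over $L = \Theta(\log N)$ levels this is polynomial in $N$, destroying the bound. What actually saves the recursion is that $\alpha = 4r$, so $2r/\alpha = 1/2 < 1$ and the recurrence $c_{\ell} \leq 2c_0 + (2r/\alpha) c_{\ell-1}$ is a contracting geometric series with $c_{\ell} \leq 4c_0$ for all $\ell$. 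You must invoke $\alpha > 2r$ (not merely $r \leq \alpha$) for the constants to stay uniformly bounded across levels; as written your argument does not establish this.
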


\begin{proof}
    Denote by $\hat{C}(\ell)$ the (recursive) work charged to epoch at level $\ell$. We get the following recurrence:

    $$\hat{C}_{\ell} \leq 2\cdot (O(\alpha^{\ell+4}L\log{a}\log^3{N}) + r\hat{C}_{\ell-1})$$
    
    This recurrence solves to $\hat{C}_{\ell} = O(\alpha^{\ell+4}L\log{a}\log^3{N})$.
\end{proof}

\subsubsection{Bounding Computation Costs} In this section, we are going to show, that for the majority of epochs $(e, *)$ at any level $\ell$, a significant number of edges in $\mathcal{D}(e)$ gets deleted before $e$. This would allow us to bound the number of epochs created at level $\ell$.

Consider the sequence of deletions of all hyperedges. Note that it is not affected by our algorithm. Let us define the rank $r(e)$ of an epoch as the rank of an edge $e$ in this order. For an epoch $(e, *)$ at level $\ell$, consider $\mathcal{D}(e)$ at the time of the creation of this epoch, and ranks of all edges in $\mathcal{D}(e)$. 

\begin{definition}
    The uninterrupted duration of an epoch $(e, *)$ is defined as the number of edges in $\mathcal{D}(e) \cup \{e\}$, whose rank does not exceed $r(e)$.
\end{definition}

\begin{definition}
    An epoch $(e, *)$ at level $\ell$ is called $\mu$-short if its uninterrupted duration does not exceed $\mu \cdot \alpha^{\ell}$, for some parameter $0 \leq \mu \leq 1$.
\end{definition}

Our main goal is to show that for some $C = \frac{1}{\poly(\log{N})\poly(\alpha)}$, the fraction of epochs that are $C$-short is fairly small. For that, let us focus on \texttt{grand-random-settle$(B, \ell)$} again, and see how many short epochs are created there.

Consider any iteration of  \texttt{grand-random-subsubsettle($B, \ell, i$)} of \texttt{grand-random-settle$(B, \ell)$}, operating on edge set $E'$. 

\begin{definition}
    An edge $e\in E'$ is called \textit{$\mu$-almost short}, if there are at least $\mu \cdot \alpha^{\ell}$ edges in $E'$, incident to $e$, with rank not exceeding $r(e)$.
\end{definition}

\begin{lemma}
\label[lemma]{lemma:NotShort}

    If an edge $e$ that is \textbf{not} $\mu$-almost short is added to the matching on level $\ell$, it does \textbf{not} form an $\frac{\mu}{rO(\log{N})}$-short epoch, with high probability.
\end{lemma}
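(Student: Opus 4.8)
The plan is to prove a lower bound on the uninterrupted duration of the epoch of $e$, namely that it exceeds $\tfrac{\mu}{r\,O(\log N)}\,\alpha^{\ell}$ with high probability, exploiting the randomness of the endpoint-selection map $h$. Recall that at the start of \texttt{grand-random-settle($B,\ell$)} we fix, independently for each $e'\in E'$, a uniformly random endpoint $h(e')\in e'$, and this map is never re-randomized while $E'$ shrinks over the iterations of \texttt{grand-random-subsubsettle}. Consider the iteration at which $e$ is added to the matching, with current edge set $E'$. First I would pin down $\mathcal{D}(e)$: since $e$ is added to $\mathcal{M}$ only because it is marked and has no incident marked edge in $E'$, every edge of $E'$ incident to $e$ is unmarked in this iteration, and by construction $\mathcal{D}(e)=\{\,e'\in E':h(e')\in e\,\}$. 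Each such $e'$ shares the vertex $h(e')$ with $e$, so $\mathcal{D}(e)$ is contained in the set $N$ of edges of $E'$ incident to $e$, and the uninterrupted duration equals $1+\#\{\,e'\in\mathcal{D}(e):r(e')<r(e)\,\}=1+\#\{\,e'\in S:h(e')\in e\,\}$, where $S:=\{\,e'\in N:r(e')<r(e)\,\}$. Since $e$ is not $\mu$-almost short, $|S|\ge\mu\alpha^{\ell}$.

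The key structural observation I would then establish is that no edge matched \emph{earlier} during this call to \texttt{grand-random-settle($B,\ell$)} is incident to $e$: if an earlier matched $e''$ contained a vertex $w$ of $e$, then $w$ would have been raised to level $\ell$, so no edge through $w$ — in particular $e$ — could lie in any $\tilde{\mathcal{O}}_{v,\ell}$, contradicting $e\in E'$. Consequently, for every $e'\in N$, the map $h$ influences whether $e'$ has survived in $E'$ up to the current iteration only through the answers to the questions ``$h(e')\in e''$?'' for matched edges $e''$ \emph{disjoint from $e$}, and all those answers were ``no''. Hence, conditioning on the full history up to the moment $e$ is matched (which determines $E'$, $N$, $S$, and all ranks), each $h(e')$ with $e'\in N$ is still uniformly distributed over $e'\setminus\bigcup_{e''}e''$, a set that contains $e'\cap e$; thus the events $\{h(e')\in e\}_{e'\in N}$ are mutually independent and each has conditional probability at least $|e'\cap e|/|e'|\ge 1/r$.

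With this in hand I would finish via a Chernoff bound. Conditioned on the above history, $\#\{e'\in S:h(e')\in e\}$ stochastically dominates a sum of $|S|\ge\mu\alpha^{\ell}$ independent Bernoulli$(1/r)$ variables, with mean at least $\mu\alpha^{\ell}/r$. If $\mu\alpha^{\ell}\ge c_0\,r\log N$ for a suitable constant $c_0$, a Chernoff bound gives $\#\{e'\in S:h(e')\in e\}\ge \mu\alpha^{\ell}/(2r)$ with probability $1-1/\poly(N)$, so the uninterrupted duration is at least $\mu\alpha^{\ell}/(2r)>\tfrac{\mu}{r\,O(\log N)}\alpha^{\ell}$ and the epoch is not $\tfrac{\mu}{r\,O(\log N)}$-short. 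If instead $\mu\alpha^{\ell}<c_0\,r\log N$, then I would choose the hidden constant in $O(\log N)$ to be at least $c_0$, so that $\tfrac{\mu}{r\,O(\log N)}\alpha^{\ell}<1$; since the uninterrupted duration is always at least $1$ (it counts $e$ itself), the epoch is again not $\tfrac{\mu}{r\,O(\log N)}$-short, deterministically. A union bound over all (at most $\poly(N)$) edges ever added to the matching transfers this to all of them simultaneously, with high probability.

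The step I expect to be the main obstacle is the middle one: arguing that conditioning on ``$e$ gets matched in this iteration and is not $\mu$-almost short'' does not distort the distribution of the $h$-values of the edges of $E'$ incident to $e$. Everything rests on the fact that no previously matched edge touches $e$, so the only information the history leaks about those $h$-values — that $h(e')$ avoided the endpoint sets of the matched edges — concerns vertices outside $e$ and is therefore irrelevant to whether $h(e')$ lands inside $e$. Once this deferred-decisions argument is set up cleanly, the rest is a routine Chernoff-plus-union-bound computation; the $\log N$ factor in the statement is precisely the slack needed to cover the regime where the expected gain $\mu\alpha^{\ell}/r$ is too small for concentration, forcing the fall-back to the trivial bound that the duration is at least $1$.
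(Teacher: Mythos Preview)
Your proposal is correct and follows essentially the same approach as the paper: both argue that each of the $k>\mu\alpha^{\ell}$ low-rank edges incident to $e$ in $E'$ lands in $\mathcal{D}(e)$ with probability at least $1/r$ via the random endpoint map $h$, then apply a Chernoff bound (with the trivial ``duration $\ge 1$'' fallback when $\mu\alpha^{\ell}/r$ is below $\Theta(\log N)$). Your write-up is in fact more careful than the paper's on two points --- you explicitly restrict to the low-rank incident edges $S$ (matching the definition of uninterrupted duration), and you supply the deferred-decisions justification for why conditioning on the history leaves each $h(e')$ uniform over a set containing $e'\cap e$, whereas the paper simply asserts the $1/r$ probability and independence without comment.
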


\begin{proof}
    Let $k > \mu\cdot \alpha^{\ell}$ be the number of incident edges to $e$ in $E'$. Each of them picked $e$ with probability at least $\frac{1}{r}$, independently of each other. 

    If $\frac{\mu}{rC\log{N}} \cdot \alpha^{\ell} \leq 1$ for some constant $C$, then the statement is trivially true since the uninterrupted duration of an epoch is always at least $1$ by definition. So, we can assume that $k \geq \mu\cdot \alpha^{\ell} > rC\log{N}$. Let us bound the probability that at most $\frac{k}{rC\log{N}}$ of these $k$ edges select $e$. If $X$ is the number of edges that select $e$, then $\mathbb{E}[X] \geq \frac{k}{r} \geq C\log{N}$, and for the right choice of $C$ we can apply Chernoff bound:

    \begin{align*}
    \mathbb{P}[X \leq \frac{k}{rC\log{N}}] &\leq 
    \mathbb{P}[X \leq \mathbb{E}[X](1 - \frac{C\log{N}-1}{C\log{N}})] \\
    &\leq e^{-\frac{C\log{N}}{2} \cdot (\frac{C\log{N}-1}{C\log{N}})^2} \leq N^{-\frac{C}{3}}    
    \end{align*}

    So, with probability at least $1 - N^{-\frac{C}{3}}$, at least $\frac{k}{rC\log{N}} \geq \frac{\mu\cdot \alpha^{\ell}}{rC\log{N}} = \frac{\mu}{rC\log{N}}\cdot \alpha^{\ell}$ will get to $\mathcal{D}(e)$, as desired.
\end{proof}

Before we go to the main lemma, we need to show one more probabilistic concentration statement.

\begin{lemma}
\label[lemma]{lemma:Gaming}
    Imagine that you are playing against an adversary. In their turn, they can give you some number of edges (they can also stop giving you edges, ending the game). In your turn, you mark each of the given edges with probability $p$. Given that the total number of edges the adversary gives you is $k = \poly(N)$, with high probability, the number of marked edges does not exceed $(kp + 1)O(\log{N})$.
\end{lemma}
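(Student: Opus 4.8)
The plan is to reduce this adaptive adversarial game to a statement about a single binomial (or sum of independent Bernoullis) by noting that the marking coins are independent of the adversary's choices — the adversary is ``oblivious'' in the sense that each edge, once handed to us, gets its own fresh fair-ish coin of probability $p$, and the adversary's decision of how many to hand out can depend only on past coin flips, not future ones. First I would set up the right martingale viewpoint: index the handed-out edges $1,2,\dots$ in the order they are given (breaking ties within a turn arbitrarily), let $Y_j\in\{0,1\}$ be the indicator that edge $j$ is marked, so the $Y_j$ are i.i.d.\ Bernoulli$(p)$ and crucially are independent of the stopping rule, since whether the game has ended by ``time'' $j$ is a function of $Y_1,\dots,Y_{j-1}$ only. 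The total number of marked edges is then $Z=\sum_{j=1}^{T}Y_j$ where $T$ is a stopping time with $T\le k$ deterministically (we are told the total is exactly $k=\poly(N)$, so in fact $T=k$, which makes this even cleaner — but I would phrase it with $T\le k$ for robustness).

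The key step is then a standard Chernoff/Bernstein tail bound for $Z=\sum_{j=1}^{k}Y_j\sim\mathrm{Bin}(k,p)$, whose mean is $\mu:=kp$. I want to show $\Prob{Z > (kp+1)\cdot C\log N} \le N^{-\Omega(C)}$ for a suitable absolute constant $C$, so that a union bound over the $\poly(N)$ many quantities of this form in the overall algorithm costs only a constant-factor loss. There are two regimes. If $\mu\ge 1$: apply the multiplicative Chernoff bound $\Prob{Z\ge (1+\delta)\mu}\le e^{-\delta^2\mu/3}$ for $\delta\ge 1$ in the form $\Prob{Z\ge t}\le 2^{-t}$ for $t\ge 6\mu$, giving $\Prob{Z\ge C\log N\cdot\mu}\le N^{-\Omega(C)}$. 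If $\mu<1$: use the upper tail bound $\Prob{Z\ge t}\le (e\mu/t)^{t}$, and with $t = C\log N$ (which is $\ge (kp+1)$ up to the constant since $kp<1$) we get $\Prob{Z\ge C\log N}\le (e/(C\log N))^{C\log N}\le N^{-\Omega(C\log\log N)}\le N^{-\Omega(C)}$. In both cases $(kp+1)C\log N$ dominates the threshold I plugged in, so the conclusion $Z\le (kp+1)O(\log N)$ holds with probability $1-1/\poly(N)$.

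The main obstacle — really the only subtle point — is justifying that the adaptivity of the adversary does not help it: one must be careful that the number of edges given in a round is allowed to depend on all marks revealed so far, so $T$ is genuinely a stopping time with respect to the natural filtration $\mathcal F_j=\sigma(Y_1,\dots,Y_j)$, and then invoke the fact that $\sum_{j\le T}(Y_j-p)$ is a martingale (optional stopping / Azuma on the stopped sequence), or simply observe that since $T\le k$ is bounded, $Z=\sum_{j=1}^{k}Y_j\mathbf 1[j\le T]\le \sum_{j=1}^{k}Y_j$ pointwise, so the tail of $Z$ is dominated by the tail of a plain $\mathrm{Bin}(k,p)$ regardless of the (adapted) stopping rule. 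I would present the domination argument, as it is the shortest and sidesteps any measure-theoretic fuss, and then quote the two-regime Chernoff estimate above. Everything else is a routine calculation that I would not grind through in the writeup beyond stating the bound used.
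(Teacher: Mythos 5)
Your argument is correct and takes essentially the same route as the paper's proof. Both proofs decouple the adversary's adaptive choices from the marking randomness by observing that each edge, once handed over, receives an independent $\mathrm{Bernoulli}(p)$ coin whose outcome the adversary cannot anticipate; the paper phrases this as a pre-marked row of $\poly(N)$ edges revealed in batches, while you phrase it as a stopped sum of i.i.d.\ $Y_j$'s dominated pointwise by a plain $\mathrm{Bin}(k,p)$ — these are the same observation. Your two-regime Chernoff estimate (multiplicative bound when $kp\geq 1$, Poisson-style bound when $kp<1$) is a slightly more detailed version of the paper's single-formula bound and is fine.

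One small point you should make explicit: since the total $T$ handed out (which the lemma calls $k$) is itself a random variable determined by the adversary's adaptive strategy, the per-$k$ estimate $\Pr[\mathrm{Bin}(k,p)>(kp+1)C\log N]\leq N^{-\Omega(C)}$ does not by itself conclude the lemma. You need a union bound over all possible realized values $k'\leq\poly(N)$ of the total, which is exactly what the paper does (``applying the union bound over each $k$''). Your remark about ``a union bound over the $\poly(N)$ many quantities of this form in the overall algorithm'' reads as referring to multiple invocations of the lemma elsewhere, not to this within-lemma union bound over $k$, so it is worth spelling out that the final step is $\Pr[Z>(Tp+1)C\log N]\leq\sum_{k'\leq\poly(N)}\Pr[\sum_{j\leq k'}Y_j>(k'p+1)C\log N]\leq\poly(N)\cdot N^{-\Omega(C)}$.
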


\begin{proof}
    It might be tempting to apply Chernoff bound directly to all these $k$ edges, but note that we do not have independence here: the number of edges the adversary gives you on the next iteration might depend on the number of marked edges on the previous one. However, we still can provide a tight concentration bound. 

    We can imagine this as follows: there are $\poly(N)$ edges in a row, each of which is marked with probability $p$. The adversary does not see which edges are marked. In their turn, they can choose any number $x$, take the next $x$ edges, and reveal which of them are marked. We claim that even in this game, for any adversarial play, the following claim is true: if they reveal $k$ edges in total, the number of marked edges among them does not exceed $(kp+1)O(\log{N})$, with high probability.

    We can do this by simply applying the union bound over each $k$. Let $X_k$ denote the number of marked edges among the first $k$ edges in this row. $\mathbb{E}[X_k] = pk$. We need to bound $\mathbb{P}[X_k > (kp+1)\cdot C\log{N}]$, for some constant $C$.

    If $k \leq (kp+1) \cdot C\log{N}$, the probability is simply $0$. Otherwise, denote $\sigma = C\log{N} + \frac{C\log{N}}{kp}-1$, and write: 

       \begin{align*}\mathbb{P}[X_k>(kp+1)\cdot C\log{N}] &= \mathbb{P}[X_k > (kp)(1 + \delta)] \\
       &\leq e^{-(kp\sigma) \frac{\sigma}{\sigma+2}} \\
       &= e^{-(kp(C\log{N} - 1) + C\log{N})\frac{\sigma}{\sigma+2}}
       \end{align*}

    Since $\sigma, C\log{N} > 1$ for the large enough choice of $C$, we get:

    \begin{align*}\mathbb{P}[X_k>(kp+1)\cdot C\log{N}] &\leq e^{-(kp(C\log{N} - 1) + C\log{N})\frac{\sigma}{\sigma+2}} \\
    &\leq e^{-\frac{C}{3} \log{N}} = N^{-\frac{C}{3}}\end{align*}

    Since we need to consider only $k \leq \poly{N}$, by the union bound, with high probability, the statement holds for each relevant $k$ (under the right choice of constant $C$). 

\end{proof}

\begin{lemma}
    Let $T_{\ell}$ denote the total number of all epochs created on level $\ell \geq 1$ throughout the algorithm. With high probability, at most $\frac{T_{\ell}}{4} + O(\log{N})$ of them were $\mu$-almost short, where $\mu = 
 \frac{1}{O(\alpha^3\log{\alpha}\log^3{N})}$.  
\end{lemma}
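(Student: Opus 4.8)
The target bound controls, over the entire run of the algorithm, the number of level-$\ell$ epochs whose creating hyperedge was $\mu$-almost-short — where $\mu$-almost-short means, as used in \cref{lemma:NotShort}, having fewer than $\mu\alpha^\ell$ incident edges in the current $E'$ of deletion-rank at most its own. The plan is to localise the count to a single iteration of \texttt{grand-random-subsubsettle}, estimate there the expected number of such additions, and then sum over all iterations, all $2\log\alpha$ phases, all repetitions of \texttt{grand-random-subsettle}, and all invocations of \texttt{grand-random-settle$(\cdot,\ell)$}, using \cref{lemma:Gaming} to pass from expectations to a high-probability statement while coping with the fact that the set $E'$ at a given iteration is itself determined by all earlier random choices.

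The structural heart of the argument is that $E'$ is cheap to understand. At every iteration of \texttt{grand-random-subsettle$(B,\ell)$} we have $E'=\bigcup_{v\in B}\tilde{\mathcal O}_{v,\ell}$, and every $v\in B$ satisfies $\tilde o_{v,\ell}=|\tilde{\mathcal O}_{v,\ell}|\ge \alpha^\ell/2$ (true at the start since $B=S_\ell$, and maintained because Step~3 of \texttt{grand-random-subsubsettle} removes from $B$ exactly the vertices that drop below $\alpha^\ell/2$). All edges of a single $\tilde{\mathcal O}_{v,\ell}$ contain $v$, hence are pairwise incident; so if $e\in\tilde{\mathcal O}_{v,\ell}$ is $\mu$-almost-short, fewer than $\mu\alpha^\ell$ of the $\ge\alpha^\ell/2$ other edges of $\tilde{\mathcal O}_{v,\ell}$ can precede $e$ in the deletion order, i.e.\ $e$ is among the $\lceil\mu\alpha^\ell\rceil$ lowest-ranked edges of $\tilde{\mathcal O}_{v,\ell}$. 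Summing over the $|B|$ stars, the number of $\mu$-almost-short edges present in $E'$ at any iteration is at most $|B|\cdot\lceil\mu\alpha^\ell\rceil$. Since such an edge joins $\mathcal M$ only if it is marked, the expected number of $\mu$-almost-short edges added in one phase-$i$ iteration (marking probability $p=2^i/\alpha^{\ell+2}$) is at most $p\,|B|\,\lceil\mu\alpha^\ell\rceil$.

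To aggregate, I would apply \cref{lemma:Gaming} once for each pair $(\ell,i)$, globally over the whole run: its ``adversary'' hands over, at each phase-$i$ iteration at level $\ell$, the $\mu$-almost-short edges of the current $E'$, each of which we mark with probability $p=2^i/\alpha^{\ell+2}$; the marked ones contain all $\mu$-almost-short additions at that iteration. The total number handed over is at most $\lceil\mu\alpha^\ell\rceil\sum|B|$ over all such iterations; using that each \texttt{grand-random-settle$(B,\ell)$} has $O(\log^2 N)$ phase-$i$ iterations with high probability (\cref{lemma:SettleLikely,lemma:SettleFast}) and adds at least $|B|/\alpha^3$ level-$\ell$ epochs — so $\sum_{\text{calls}}|B|\le\alpha^3 T_\ell$ — this total is $O(\lceil\mu\alpha^\ell\rceil\,\alpha^3\log^2 N)\,T_\ell$. \cref{lemma:Gaming} then bounds the phase-$i$ contribution by $O\!\bigl(\lceil\mu\alpha^\ell\rceil\,\alpha^3\,2^i\,\alpha^{-\ell-2}\log^3 N\bigr)\,T_\ell+O(\log N)$ w.h.p.; summing over $i\in\{1,\dots,2\log\alpha\}$ (so $\sum_i 2^i=O(\alpha^2)$) and using $\lceil\mu\alpha^\ell\rceil\le 2\mu\alpha^\ell$ when $\mu\alpha^\ell\ge1$ yields $O(\mu\alpha^3\log^3 N)\,T_\ell+O(\log\alpha\cdot\log N)$. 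A large enough hidden constant in $\mu=1/O(\alpha^3\log\alpha\log^3 N)$ makes the first term $\le T_\ell/4$, and the second is $O(\log N)$ (bounding $\log\alpha=O(\log N)$). The few small $\ell$ with $\mu\alpha^\ell<1$, where the per-star bound $\lceil\mu\alpha^\ell\rceil=1$ is too weak, need a separate and more careful estimate.

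I expect the aggregation to be the most delicate part: the additive ``$+1$'' in \cref{lemma:Gaming} forces it to be invoked once per $(\ell,i)$ rather than once per \texttt{grand-random-settle} call (of which there may be $\poly(N)$), the quantity $\sum|B|$ fed to the lemma must be correctly tied to $T_\ell$ through the ``$\ge|B|/\alpha^3$ edges added'' guarantee, and the failure events (of \cref{lemma:SettleLikely}, of \cref{lemma:Gaming}, and — when this lemma is later combined with \cref{lemma:NotShort} in a budget argument to bound $T_\ell$ — of \cref{lemma:NotShort}) must all be union-bounded over only $\poly(N)$ events. By contrast, the per-iteration bound on the number of $\mu$-almost-short edges in $E'$ is the conceptually decisive step, and it is short once one observes that $E'$ is a union of $|B|$ mutually-incident stars each of size $\ge\alpha^\ell/2$.
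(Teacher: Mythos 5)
Your proposal matches the paper's proof essentially step for step: both rest on the same ``star'' observation that only the $\lceil\mu\alpha^\ell\rceil$ lowest-ranked edges of each $\tilde{\mathcal{O}}_{v,\ell}$ (all pairwise incident through $v$) can be $\mu$-almost-short, both aggregate over iterations via \cref{lemma:Gaming}, and both convert $\sum|B|$ into $T_\ell$ through the ``at least $|B|/\alpha^3$ new level-$\ell$ epochs per call'' lemma. The only departure is that you invoke \cref{lemma:Gaming} once per phase index $i$ with the exact $p_i=2^i/\alpha^{\ell+2}$, whereas the paper invokes it once per level with the uniform upper bound $p\le 1/\alpha^{\ell}$ — a variant that avoids an implicit coupling/monotonicity step at the cost of an extra $O(\log\alpha)$ factor in the additive term (so your additive term is really $O(\log\alpha\cdot\log N)$, not $O(\log N)$, though this is harmless downstream); your explicit $\lceil\cdot\rceil$ and your flag on the $\mu\alpha^\ell<1$ edge case are, if anything, slightly more careful than the paper's write-up.
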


\begin{proof}
    Consider an iteration of  \texttt{grand-random-subsubsettle($B, \ell, i$)} of \texttt{grand-random-settle$(B, \ell)$}, operating on edge set $E'$. There are at most $|B|\cdot \mu \cdot \alpha^{\ell}$ $\mu$-almost short edges in $E'$: at most $\mu \cdot \alpha^{\ell}$ edges with smallest rank in $\tilde{\mathcal{O}}_{v, \ell}$ for each $v\in B$ can be $\mu$-almost short. Each of them becomes marked with probability at most $\frac{1}{\alpha^{\ell}}$, and therefore gets matched on level $\ell$ with probability at most $\frac{1}{\alpha^{\ell}}$.

    Even though these markings are not independent, they behave according to the process described in the \cref{lemma:Gaming}. The total number of marked edges over all \texttt{grand-random-subsubsettle($B, \ell, i$)} procedures is trivially $O(\poly{N})$. So, if $X_{tot}$ is the total number of  almost short edges at level $\ell$ over all \texttt{grand-random-subsubsettle($B, \ell, i$)} procedures, the number of $\mu$-almost-short of them does not exceed $(\frac{X_{tot}}{\alpha^{\ell}} + 1)O(\log{N})$, with high probability, by \cref{lemma:Gaming}.

    We now consider the terms in $X_{tot}$. \texttt{grand-random-settle$(B, \ell)$} has $O(\log{\alpha}\log^2{N})$ iterations of \texttt{grand-random-subsubsettle($B, \ell, i$)} with high probability, so it contributes $O((|B|\mu\alpha^{\ell})\cdot (\log{\alpha}\log^2{N}))$ to $X_{tot}$ with high probability. From other side, it matches at least $\frac{|B|}{\alpha^3}$ edges to level $\ell$.

    Now, let $B_{tot}$ be the total sum of $|B|$ on level $\ell$ over all invocations of \texttt{grand-random-settle$(B, \ell)$}. Then $T_{\ell} \geq \frac{B_{tot}}{\alpha^3}$, and the number of $\mu$-almost short among them does not exceed $(\frac{B_{tot}\cdot \mu \cdot \alpha^{\ell}\cdot \log{\alpha}\log^2{N}}{\alpha^{\ell}} + 1)C\log{N}$ w.h.p. For our inequality, it is enough to show that w.h.p.

    $$B_{tot}\cdot \mu \cdot \log{\alpha}\log^2{N} \cdot C\log{N} \leq \frac{B_{tot}}{4\alpha^3}$$ which is equivalent to $$  \mu \leq \frac{1}{4\alpha^3 \cdot \log{\alpha}\log^2{N} \cdot C\log{N}} = \frac{1}{O(\alpha^3\log{\alpha}\log^3{N})}$$

So our choice of $\mu$ suffices.
\end{proof}


\begin{lemma}

\label[lemma]{lemma:FewShort}

    With high probability, the following statement holds for each level $\ell$:

    \begin{itemize}
        \item Let $T_{\ell}$ be the total number of all epochs ever created at level $\ell$. At most $\frac{T_{\ell}}{4} + O(\log{N})$ of them are $\mu$-short, where $\mu = \frac{1}{O(\alpha^4\log{\alpha}\log^4{N})}$.
    \end{itemize}
\end{lemma}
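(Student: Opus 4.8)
The plan is to derive the $\mu$-short bound from the $\mu'$-almost-short bound proved in the previous lemma (with $\mu' = \frac{1}{O(\alpha^3\log\alpha\log^3 N)}$), together with \Cref{lemma:NotShort}. The key observation is the contrapositive of \Cref{lemma:NotShort}: if an edge $e$ added to the matching at level $\ell$ forms a $\frac{\mu'}{rO(\log N)}$-short epoch, then (with high probability) $e$ was $\mu'$-almost short at the iteration of \texttt{grand-random-subsubsettle} in which it was matched. Hence, up to the high-probability failure event, the set of epochs at level $\ell$ that are $\frac{\mu'}{rO(\log N)}$-short is contained in the set of epochs whose defining edge was $\mu'$-almost short when matched. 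Setting $\mu := \frac{\mu'}{rO(\log N)} = \frac{1}{O(\alpha^3\log\alpha\log^3 N)\cdot r\cdot O(\log N)} = \frac{1}{O(\alpha^4\log\alpha\log^4 N)}$ (absorbing the factor $r \le \alpha$ into the $\alpha^4$), this gives exactly the claimed value of $\mu$.

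First I would make the inclusion precise: fix a level $\ell \ge 1$, let $T_\ell$ be the total number of epochs ever created at level $\ell$, and for each such epoch let $e$ be its defining hyperedge and let $E'$ be the edge set in the iteration of \texttt{grand-random-subsubsettle($B,\ell,i$)} in which $e$ was selected into $\mathcal{M}$. Applying \Cref{lemma:NotShort} to this iteration: conditioned on $e$ not being $\mu'$-almost short in $E'$, the epoch $(e,*)$ fails to be $\mu$-short with probability $1 - N^{-\Omega(1)}$. Since there are at most $\poly(N)$ such matched edges over the whole run (each edge is matched at most $\poly(N)$ times), a union bound over all of them shows that with high probability, \emph{every} epoch at level $\ell$ that is $\mu$-short had its defining edge $\mu'$-almost short when matched. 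Combined with the previous lemma, which says at most $\frac{T_\ell}{4} + O(\log N)$ epochs at level $\ell$ had a $\mu'$-almost-short defining edge, we conclude that at most $\frac{T_\ell}{4} + O(\log N)$ epochs at level $\ell$ are $\mu$-short, with high probability. A final union bound over the $O(L)$ levels yields the "for each level $\ell$" quantifier.

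The main obstacle I anticipate is the bookkeeping around the high-probability events and the definition of "uninterrupted duration": \Cref{lemma:NotShort} is phrased in terms of edges of $E'$ incident to $e$ with rank at most $r(e)$ landing in $\mathcal{D}(e)$, whereas the $\mu$-short definition counts edges of $\mathcal{D}(e)\cup\{e\}$ with rank at most $r(e)$ at the time of epoch creation — I would need to check that the edges counted in \Cref{lemma:NotShort} are exactly the ones that get added to $\mathcal{D}(e)$ in step 2 of \texttt{grand-random-subsubsettle} (namely the non-marked $e'$ with $f(e')\in e$, i.e.\ the $e'$ that "picked" $e$), so that the two notions line up. The other mild subtlety is ensuring the union bound is over only $\poly(N)$ events so that "with high probability" survives; this follows since the total number of edges ever marked across all calls is $\poly(N)$, as already used in \Cref{lemma:Gaming}. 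Neither of these is deep, so the proof should be short.
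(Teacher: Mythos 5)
Your proposal is correct and follows essentially the same route as the paper: combine the preceding lemma bounding the number of $\mu'$-almost-short epochs with the contrapositive of \Cref{lemma:NotShort} to transfer the bound to $\mu$-short epochs, then union bound over levels. Your version spells out the bookkeeping (the containment of events, the $\poly(N)$ union bound, and the $r\le\alpha$ absorption) more explicitly, but the argument is the same.
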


\begin{proof}
    First, focus on a single level $\ell$. We know that at most $\frac{T_{\ell}}{4} + O(\log{N})$ of epochs created on level $\ell$ were $\frac{1}{O(\alpha^3\log{\alpha}\log^3{N})}$-almost short, w.h.p.. By invoking \cref{lemma:NotShort}, we conclude that all epochs that were not $\frac{1}{O(\alpha^3\log{\alpha}\log^3{N})}$-almost short, did not become $\frac{1}{O(\alpha^3\log{\alpha}\log^3{N}) \cdot rO(\log{N}}) = \frac{1}{O(\alpha^4\log{\alpha}\log^4{N})}$ short, w.h.p. The lemma then follows from a union bound over all levels $\ell$.
\end{proof}

Assuming this \cref{lemma:FewShort}, we can bound the total computation cost, with high probability.

\begin{lemma}
    The total work charged to all epochs is at most  $t \cdot O(\alpha^8L^2\log^2{\alpha}\log^7{N})$, with high probability.
\end{lemma}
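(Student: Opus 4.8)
The plan is to combine the two main ingredients already assembled: the per-epoch charge bound and the bound on the number of epochs at each level. From the charging analysis we know that every insertion/deletion update is charged $O(rL\log N)$ work, and (after the natural/induced redistribution) every fully-natural epoch at level $\ell$ carries a recursive charge of $O(\alpha^{\ell+4}L\log\alpha\log^3 N)$, with high probability. So the total work is $O(t\cdot rL\log N)$ from the updates plus $\sum_{\ell=0}^{L} T_\ell\cdot O(\alpha^{\ell+4}L\log\alpha\log^3 N)$ from the epochs, where $T_\ell$ is the number of level-$\ell$ epochs ever created. The whole task thus reduces to bounding $\sum_\ell T_\ell\alpha^\ell$ by roughly $t\cdot\poly(\alpha,\log N)$.

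First I would bound $T_\ell$ for each $\ell$. By \Cref{lemma:FewShort}, with high probability at most $T_\ell/4 + O(\log N)$ of the level-$\ell$ epochs are $\mu$-short for $\mu = \frac{1}{O(\alpha^4\log\alpha\log^4 N)}$; hence at least $T_\ell/2$ of them are \emph{not} $\mu$-short (absorbing the additive $O(\log N)$ term, valid once $T_\ell$ is at least some $\Omega(\log N)$; the few levels with $T_\ell = O(\log N)$ contribute a negligible additive term to the final bound). For each non-$\mu$-short epoch $(e,*)$, its uninterrupted duration exceeds $\mu\alpha^\ell$, i.e.\ at least $\mu\alpha^\ell$ of the edges in $\mathcal D(e)\cup\{e\}$ have rank at most $r(e)$ — meaning they are deleted (in the fixed adversarial deletion order) no later than $e$. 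The key point, already emphasized in the ``Intuition Behind Temporarily Deleted Edges'' paragraph, is that the sets $\mathcal D(e)$ over distinct \emph{simultaneously alive} epochs are disjoint, and an edge placed in $\mathcal D(e)$ is temporarily deleted and re-enters the graph only when $e$'s epoch ends. So I would charge each non-$\mu$-short epoch to the $\ge \mu\alpha^\ell$ \emph{deletion updates} of edges in $\mathcal D(e)$ that precede $e$'s deletion; a standard accounting argument (each deletion update is charged by at most one epoch per level, since once an edge is deleted it is gone, and we count only edges of rank $\le r(e)$ sitting in $\mathcal D(e)$ at epoch-creation time) shows that at most $O(t/(\mu\alpha^\ell))$ non-$\mu$-short level-$\ell$ epochs can exist, where $t$ is the total number of updates. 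Combined with the previous sentence, $T_\ell \le 2\cdot O(t/(\mu\alpha^\ell)) + O(\log N) = O\!\left(\frac{t\alpha^4\log\alpha\log^4 N}{\alpha^\ell}\right) + O(\log N)$.

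Plugging this in, $\sum_{\ell=0}^{L} T_\ell\cdot O(\alpha^{\ell+4}L\log\alpha\log^3 N)$ becomes $\sum_\ell O\!\left(\frac{t\alpha^4\log\alpha\log^4 N}{\alpha^\ell}\right)\cdot O(\alpha^{\ell+4}L\log\alpha\log^3 N) + \sum_\ell O(\log N)\cdot O(\alpha^{\ell+4}L\log\alpha\log^3 N)$. In the first sum the $\alpha^\ell$ factors cancel, leaving $O(t\alpha^8 L\log^2\alpha\log^7 N)$ per level and hence $O(t\alpha^8 L^2\log^2\alpha\log^7 N)$ over all $L+1$ levels; the second sum is dominated because $\alpha^{L}=O(N)$ and we may assume $t=\Omega(N/\poly)$ or simply fold it in (the total number of edges ever inserted, which is $O(t)$, bounds the number of epochs ever created, so $\sum_\ell T_\ell\alpha^\ell$ cannot exceed $O(t)\cdot\alpha^L = O(tN)$ trivially — but the refined bound above is what gives the clean $O(t\alpha^8L^2\log^2\alpha\log^7 N)$). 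Adding the $O(trL\log N)$ update charge, which is absorbed, yields the claimed total $t\cdot O(\alpha^8 L^2\log^2\alpha\log^7 N)$.

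The main obstacle, and the step deserving the most care, is the disjointness/charging argument for $T_\ell$: one must verify that an edge counted toward the uninterrupted duration of one epoch cannot simultaneously be counted toward another \emph{alive} epoch at the same level, and that the ``rank $\le r(e)$'' condition genuinely pins each such edge to a real deletion update happening within the epoch's lifespan (so that the budget is not double-spent across epochs that are alive at disjoint times but share an edge that got reinserted). This is exactly the invariant that temporarily deleted edges of $e$ do not participate in the algorithm — in particular are not in any other $\mathcal D(e')$ — until $e$'s epoch terminates; I would state this as a short structural lemma and use it to make the charging injective. Everything else is the arithmetic above.
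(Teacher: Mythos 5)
You correctly reproduce the overall structure of the paper's argument (bound $T_\ell$, multiply by the per-epoch charge, cancel $\alpha^\ell$, sum over the $L+1$ levels), and you even flag the right obstacle in your final paragraph. But the resolution you propose for that obstacle does not work, and this is precisely where the paper needs the \emph{fully-natural} restriction that your argument never actually uses. The invariant you cite --- that while $(e,*)$ is alive, edges of $\mathcal{D}(e)$ are temporarily deleted and hence in no other $\mathcal{D}(e')$ --- only gives disjointness \emph{at any single instant}. It does not prevent double-spending across time: if $(e_1,*)$ at level $\ell$ is an \emph{induced} epoch (terminated because $e_1$ was displaced), then any $e''\in\mathcal{D}(e_1)$ that has not yet been adversarially deleted is reinserted into the graph when $(e_1,*)$ ends, and can later land in $\mathcal{D}(e_2)$ for another level-$\ell$ epoch. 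If $e''$ has rank below both $r(e_1)$ and $r(e_2)$, it is counted in the uninterrupted duration of both. So your claim that ``at most $O(t/(\mu\alpha^\ell))$ non-$\mu$-short level-$\ell$ epochs can exist'' is not justified when taken over \emph{all} such epochs, and the per-deletion charging you describe is not injective.

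The paper's fix is to count only \emph{fully-natural} non-$\mu$-short epochs. For a natural epoch $(e,*)$, the adversary eventually deletes $e$ itself, so every $e''\in\mathcal{D}(e)$ with rank $<r(e)$ is actually deleted strictly during $(e,*)$'s lifetime (it was alive when $\mathcal{D}(e)$ was formed, it stays sequestered in $\mathcal{D}(e)$ until deleted, and it is deleted before $e$). Once such an $e''$ is adversarially deleted it is gone for good and cannot enter any later $\mathcal{D}(\cdot)$; this is what makes the charging injective. This restriction in turn requires the natural/induced \emph{level} dichotomy from the preceding subsection: only on a natural level do you know that at least half of the $T_\ell$ epochs are natural ($=$ fully-natural there), giving $T_\ell/2 - T'_\ell \ge T_\ell/4 - O(\log N)$ fully-natural non-$\mu$-short epochs, hence $T_\ell \le 4(t/(\mu\alpha^\ell)+O(\log N))$; on an induced level no work is charged, so no $T_\ell$ bound is needed. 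You invoke ``fully-natural'' when recalling the per-epoch charge, but you do not use it in the $T_\ell$ bound, which is exactly where it is load-bearing. Repair this, and the remaining arithmetic in your proposal matches the paper's.
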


\begin{proof}
    Remember that we are charging only fully-natural epochs at natural levels. Let $T_{\ell}$ be the total number of all epochs ever created at level $\ell$, and $T'_{\ell}$ be the total number of $\mu$-short ones among them, where $\mu = \frac{1}{O(\alpha^4\log{\alpha}\log^4{N})}$. Then the number of fully natural not $\mu$-short epochs at level $\ell$ is at least $\frac{T_{\ell}}{2}-T'_{\ell}$.

    For each such epoch $(e, *)$, it gets deleted after at least $\mu\cdot \alpha^{l}$ deletions from $\mathcal{D}(e)$. Therefore, if $t$ is the total number of all updates, we get:
    $$(\frac{T_{\ell}}{2}-T'_{\ell})\mu\cdot \alpha^{l} \leq t $$ which implies $$ (\frac{T_{\ell}}{2}-(\frac{T_{\ell}}{4} + O(\log{N})))\mu\cdot \alpha^{l} \leq t$$ and thus we conclude $$ T_{\ell} \leq 4(\frac{t}{\mu\cdot \alpha^{l}} + O(\log{N})).$$

    \noindent On the other hand, the total work charged does not exceed 
    \begin{align*} 
    &  &&T_{\ell}\cdot O(\alpha^{\ell+4}L\log{\alpha}\log^3{N}) \\ &\leq &&4(\frac{t}{\mu\cdot \alpha^{l}} + O(\log{N}))\cdot O(\alpha^{\ell+4}L\log{\alpha}\log^3{N}) \\
    &= &&t \cdot O(\frac{\alpha^4L\log{\alpha}\log^3{N}}{\mu}) + O(\alpha^{\ell+4}L\log{\alpha}\log^4{N})\\
    &= &&t \cdot O(\alpha^8L\log^2{\alpha}\log^7{N}) + O(\alpha^{\ell+4}L\log{\alpha}\log^4{N})
\end{align*}
    
    Summing this up over all epochs, with high probability, the total work charged is upper bounded by the following:

    \begin{align*}
     & &&\sum_{l = 0}^{L} t\cdot O(\alpha^8L\log^2{\alpha}\log^7{N}) + O(\alpha^{\ell+4}L\log{\alpha}\log^4{N}) \\ 
     &=  && t\cdot O(\alpha^8L^2\log^2{\alpha}\log^7{N}) + O(\alpha^{\ell+4}L\log{\alpha}\log^4{N})\\  
     &= && t\cdot O(\alpha^8L^2\log^2{\alpha}\log^7{N})   
    \end{align*}
\end{proof}

\begin{theorem}
\label[theorem]{theorem:WorkBound}
    The total work performed by the algorithm is $tO(\alpha^8L^2\log^2{\alpha}\log^7{N})$, with high probability. Therefore, the amortized cost per update is $O(\alpha^8L^2\log^2{\alpha}\log^7{N})$, with high probability.
\end{theorem}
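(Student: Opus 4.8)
The plan is to observe that the accounting developed in the preceding subsections already attributes \emph{every} operation performed by the algorithm either to one of the $t$ insert/erase updates or to an epoch, so it suffices to sum the two kinds of charges and then divide by $t$ for the amortized bound.

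First I would collect the charges to updates. From the ``Charging Computation Costs'' discussion, deleting or inserting a batch of $x$ hyperedges costs $O(xrL\log N)$ work, and each such update is charged $O(rL\log N)$ (reinsertions of temporarily deleted hyperedges are charged instead to the epoch of the evicted matched edge, so they do not appear here). Summed over the $t$ updates this contributes $O(trL\log N)$, which, using $\alpha = 4r$, is at most $t\cdot O(\alpha^8 L^2\log^2\alpha\log^7 N)$ and hence absorbed into the target bound.

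Next I would invoke the epoch bound. After the redistribution of charges in the Natural/Induced Epochs subsection---every induced epoch's recursive charge pushed to the higher-level epoch that evicted it, every semi-natural epoch's charge pushed to the induced epoch it is matched to---all cost ends up on fully-natural epochs at natural levels, and by the recurrence $\hat C_\ell \le 2(O(\alpha^{\ell+4}L\log\alpha\log^3 N)+r\hat C_{\ell-1})$ each level-$\ell$ epoch carries recursive charge $O(\alpha^{\ell+4}L\log\alpha\log^3 N)$ with high probability. Combining this with \cref{lemma:FewShort}---which bounds the number of $\mu$-short epochs at each level by $T_\ell/4+O(\log N)$ for $\mu=1/O(\alpha^4\log\alpha\log^4 N)$---and the fact that a fully-natural non-short level-$\ell$ epoch $(e,*)$ survives until at least $\mu\alpha^\ell$ members of $\mathcal{D}(e)$ have been deleted (so that $T_\ell\le 4(t/(\mu\alpha^\ell)+O(\log N))$), the immediately preceding lemma yields that the total work charged to all epochs is $t\cdot O(\alpha^8 L^2\log^2\alpha\log^7 N)$ with high probability.

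Finally I would add the two contributions, note that both are $t\cdot O(\alpha^8 L^2\log^2\alpha\log^7 N)$, conclude the stated total-work bound, and divide by $t$ to obtain the amortized cost. I would also remark that treating $N$ as fixed is harmless: once $N$ further updates arrive we double $N$, rebuild all data structures from scratch, and charge the rebuild to the $N$ new updates, so every update is charged at most twice and only constants change. The main obstacle is not any single estimate but the bookkeeping: one must check that the charging map is exhaustive and consistent with the induced-to-higher-level and semi-natural-to-induced redistributions (no unit of work dropped or double-counted in a way that breaks the recurrence for $\hat C_\ell$), and that the polynomially many ``with high probability'' events arising across \cref{lemma:SettleFast}, \cref{lemma:NotShort}, \cref{lemma:Gaming}, and \cref{lemma:FewShort} can be union-bounded so that a single high-probability guarantee covers the entire execution.
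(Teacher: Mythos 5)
Your proposal is correct and follows essentially the same route as the paper's own proof: the theorem is obtained by summing the $O(rL\log N)$ per-update charges with the previously established $t\cdot O(\alpha^8 L^2\log^2\alpha\log^7 N)$ total epoch charge and dividing by $t$. Your additional remarks on the charging map's exhaustiveness and the union bound over the high-probability events are sensible caveats, but they do not change the structure of the argument.
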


\begin{proof}
    Each deletion or insertion was charged $O(rL\log{N})$, and the total cost charged to all other epochs is $tO(\alpha^8L^2\log^2{\alpha}\log^7{N})$. So, we can upper bound the amortized computation cost per update by $O(\alpha^8L^2\log^2{\alpha}\log^7{N})$.
\end{proof}

\section{Conclusion and Open Problems} 
This paper presented a randomized parallel dynamic algorithm for maximal matching, with $\poly(\log n)$ depth to process an arbitrary-size batch of edge insertion/deletion updates, and with $\poly(\log n)$ amortized work per update. The algorithm also generalizes to hypergraphs of rank $r$ with a 
$\poly(r)$ factor increase in work per update. We view this result as a starting step in understanding the parallelism of dynamic algorithms for (maximal) matching. Contrast this with the sequential variant, for which we have learned gradually better results through a long line of improvements~\cite{onak2010maintaining,baswana2011fully,bhattacharya2015deterministic, bhattacharya2016new,solomon2016fully,assadi2021fully,kiss2022deterministic,bhattacharya2023dynamicBeat2,bhattacharya2023dynamic1Eps, azarmehr2024fully}. There are improved or alternative parallel dynamic results that one can desire. Two questions particularly stand out:

\begin{itemize}
    \item[(1)] Our algorithm is randomized and works against an oblivious adversary. This is a drawback that is also common to the prior work on sequential dynamic algorithms for maximal matching~\cite{baswana2011fully, solomon2016fully, assadi2021fully}. However, for the more relaxed problem of $(2+\eps)$-approximate maximum matching, for any constant $\eps>0$, there is a deterministic dynamic algorithm with $\poly(\log n)$ amortized work per update by Bhattacharya, Henzinger, and Nanongkai~\cite{bhattacharya2016new}. Can we get a similar result in parallel, i.e., a \textit{deterministic} parallel dynamic algorithm (which can work against an adaptive adversary) for constant-approximate maximum matching with $\poly(\log n)$ depth for processing any batch of updates and $\poly(\log n)$ amortized work per update? 
    
    \item[(2)] Our algorithm's work per update is $\poly(\log n)$. This is roughly similar to the work bound achieved by Onak and Rubinfeld~\cite{onak2010maintaining} and Baswana, Gupta, and Sen~\cite{baswana2011fully}. However, sequential dynamic algorithms have been further optimized and now algorithms with constant work per update are known, e.g., by Solomon~\cite{solomon2016fully} and with generalization to hypergraphs by Assadi and Solomon~\cite{assadi2021fully}. Can we get a parallel dynamic algorithm for maximal matching with $O(1)$ amortized work per update and $\poly(\log n)$ depth for processing any batch of updates?
    
\end{itemize}

\section*{Acknowledgement} This work was partially supported by a grant from the Swiss National Science Foundation (project grant 200021$\_$184735). 


\bibliographystyle{alpha}
\bibliography{references, ref2, ref3}

\end{document}